\definecolor{myurlcolor}{rgb}{0,0,0.7}
\newcommand{\proj}[1]{| #1\rangle\!\langle #1 |}
\DeclareMathOperator{\trace}{Tr}
\newcommand{\Ptr}[2]{\trace_{#1}\Pa{#2}}
\newcommand{\Tr}[1]{\Ptr{}{#1}}
\newcommand{\Innerm}[3]{\left\langle #1 \left| #2 \right| #3 \right\rangle}
\newcommand{\Pa}[1]{\left[#1\right]}
\newcommand{\norm}[1]{\left\lVert #1 \right\rVert}
\theoremstyle{plain}
\newtheorem{thm}{\protect\theoremname}
\newtheorem{prop}[thm]{Proposition}
\newtheorem{lem}[thm]{Lemma}
\providecommand{\theoremname}{Theorem}
\newcommand*{\myproofname}{Proof}
\newenvironment{mproof}[1][\myproofname]{\begin{proof}[#1]}{\end{proof}}
\def\ot{\otimes}
\def\real{\mathbb{R}}
\def\cI{\mathcal{I}}
\def\cD{\mathcal{D}}
\def\cH{\mathcal{H}}
\def\cE{\mathcal{E}}
\begin{document}

  \author{Kaifeng Bu}
 \email{kfbu@fas.harvard.edu}
 \affiliation{School of Mathematical Sciences, Zhejiang University, Hangzhou 310027, PR~China}
 \affiliation{Department of Physics, Harvard University, Cambridge, MA 02138, USA}

  \author{Lu Li}
  \email{lilu93@zju.edu.cn}
 \affiliation{School of Mathematical Sciences, Zhejiang University, Hangzhou 310027, PR~China}

   \author{Arun Kumar Pati}
  \email{akpati@hri.res.in}
\affiliation{Quantum Information and Computation Group, Harish-Chandra Research Institute, HBNI, Allahabad, 211019, India}

    \author{Shao-Ming Fei}
    \email{feishm@cnu.edu.cn}
 \affiliation{School of Mathematical Sciences, Capital Normal University, Beijing 100048, PR China}
  \affiliation{Max-Planck-Institute for Mathematics in the Sciences, 04103 Leipzig, Germany}

 \author{Junde Wu}
 \email{wjd@zju.edu.cn}
 \affiliation{School of Mathematical Sciences, Zhejiang University, Hangzhou 310027, PR~China}

\title{Distribution of coherence in multipartite systems under entropic coherence measure}

\begin{abstract}
The distribution of coherence in multipartite systems is one of the fundamental problems in the resource theory of coherence. To quantify the coherence in
multipartite systems more precisely, we introduce new coherence measures, incoherent-quantum (IQ) coherence measures, on bipartite systems by the max- and min- relative entropies and provide the
 operational interpretation  in certain subchannel discrimination problem. By introducing the
smooth max- and min- relative entropies of  incoherent-quantum (IQ) coherence on bipartite systems, we exhibit the distribution of coherence in multipartite systems: the total coherence
is lower bounded by the sum of local coherence and genuine multipartite entanglement.  Besides, we find the monogamy relationship for coherence on multipartite systems by incoherent-quantum (IQ) coherence measures.  Thus, the IQ coherence measures  introduced here
truly capture the non-sharability of quantumness of coherence in multipartite context.
\end{abstract}

\maketitle

\section{Introduction}
The key feature of quantumness in a single system can be
captured by quantum coherence, stemming from the superposition principle in quantum mechanics.
Quantum coherence, as one of the most primitive quantum resource, plays a crucial role in  a variety of applications
ranging from thermodynamics \cite{Lostaglio2015,Lostaglio2015NC} to metrology \cite{Giovannetti2011}.
Recently, the resource theory of coherence has attracted much attention \cite{Baumgratz2014,Girolami2014,Streltsov2015,Winter2016,Killoran2016,Chitambar2016,Chitambar2016a}.
There are other notable resource theories including quantum entanglement \cite{HorodeckiRMP09}, asymmetry \cite{Bartlett2007,Gour2008,Gour2009,Marvian2012,Marvian2013,Marvian14,Marvian2014}, thermodynamics \cite{Fernando2013}, and steering \cite{Rodrigo2015},
among which entanglement is the most famous one and can be used as a basic resource for various quantum information processing protocols such as
superdense coding \cite{Bennett1992}, remote state preparation \cite{Pati2000,Bennett2001} and quantum teleportation \cite{Bennett1993}.

In a resource theory, there are two basic elements:
free states and free operations.
The free states in the resource theory of coherence are called incoherent states, which are defined
as the diagonal states in a given reference basis $\set{\ket{i}}^{d-1}_{i=0}$ for a d-dimensional system. The set of
incoherent states is denoted by $\cI$. Any quantum state can be mapped to an incoherent state by the full dephasing
operation $\Delta(\rho)=\sum^{d-1}_{i=0}\Innerm{i}{\rho}{i}\proj{i}$. However, there is still no general consensus on the set of free operations
in the resource theory of coherence. Here, we take the incoherent operations (IO)  as the free operations, where an operation $\Lambda$ is called an incoherent operation (IO) if there exists a set of Kraus operators $\{K_i\}$ of $\Lambda$  such that $K_i\mathcal{I} K^{\dag}_i\subseteq \mathcal{I}$ for each $i$ \cite{Baumgratz2014}.
To quantify the amount of coherence in the states, several operational coherence measures have been proposed, namely,
the relative entropy of coherence \cite{Baumgratz2014}, the $l_1$ norm of coherence
\cite{Baumgratz2014}, the max-relative entropy of coherence \cite{Bu2017b,Chitambar2016b}  and the robustness of coherence \cite{Napoli2016}.
 These coherence measures provide the lucid quantitative and operational
description of coherence.

The distribution of quantum correlations in multipartite systems is one of
the fundamental properties distinguishing quantum correlations from the classical
ones, as quantum correlations cannot be shared freely by the subsystems.
For example, for any pure tripartite state,  if  Alice and Bob share a maximally entangled state, then
neither Alice nor Bob can be entangled with Charlie, which is dubbed as the monogamy of entanglement \cite{Coffman2000,Koashi2004,Osborne2006,Winter2016Entdis}.
Besides, the monogamy of coherence has been investigated in Refs. \cite{Yao2015,Kumar2017}, where it has been shown that
the monogamy of coherence for relative entropy of coherence does not hold in general.
The distribution of coherence in bipartite and multipartite systems  has also been investigated in
Ref. \cite{Ma2016} and \cite{Radhakrishnan2016}, respectively.
In \cite{Radhakrishnan2016}, the following trade-off relation in multipartite systems has been demonstrated,
\begin{eqnarray}
C\leq C_I+C_L,
\end{eqnarray}
where $C$  is the total coherence of the whole system,  $C_I$ is called intrinsic coherence which captures the coherence between different subsystems, $C_L$
is called local coherence which describes the coherence located on each subsystem, and all these three coherence measures are defined by
some distance measure which is required to satisfy some conditions, such as  triangle inequality. However, it seems that none of the coherence measures defined by  $l_1$ norm, relative entropy  or
Jensen-Shannon divergence meets the requirements in Ref. \cite{Radhakrishnan2016}, as  the existence of triangle inequality for relative entropy  and
Jensen-Shannon divergence is still unknown, while the $l_1$ norm is superadditive for product states, for which a coherence measure is required to be subadditive so that the local coherence
will be upper bounded by  the sum of  the coherence in each subsystem \cite{Radhakrishnan2016}.
Thus, a rigorous characterization of the distribution of coherence in multipartite system
is imperative and of paramount importance.

Here, we investigate the distribution of coherence in multipartite system in terms of the max- and min- relative entropies.
The well-known conditional and unconditional max- and min- entropies  \cite{Renner2004IEEE,Renner2005phd}
can be derived from the max- and min- relative entropies.
Max- and min-relative entropies have  also  been used to define entanglement monotones and their operational significance in manipulation of entanglement has also been
provided in Refs. \cite{Datta2009IEEE,Datta2009,Brandao2011,Buscemi2010}. Coherence measures  based on the max- and min-relative entropies have been introduced in Refs. \cite{Chitambar2016b,Bu2017b}, where
the operational interpretations have also been provided in Ref. \cite{Bu2017b}. In this letter, incoherent-quantum (IQ) coherence measures on
bipartite systems are introduced in terms of the max- and min-relative entropies, which capture the maximal advantage of  bipartite states in certain subchannel discrimination problems. By introducing the smooth max- and min-relative entropies of IQ coherence measures on bipartite systems,
we find that the total coherence of a multipartite state
is lower bounded by the sum of local coherence in each subsystem and the genuine multipartite entanglement in the multipartite system.
Moreover, we obtain the monogamy of coherence in terms of IQ coherence measure. Therefore, the IQ coherence measures introduced here
truly capture the non-sharability of quantumness.

\section{main result}
Let $\cH$ be a d-dimensional Hilbert space and $\cD(\cH)$ the set
of density operators acting on $\cH$.
The max-relative entropy of coherence  for a given state $\rho\in \cD(\cH)$ is defined as
\begin{eqnarray*}
C_{\max}(\rho)
=\min_{\sigma\in\cI}
D_{\max}(\rho||\sigma),
\end{eqnarray*}
where max-relative entropy $D_{\max}$ \cite{Datta2009IEEE,Datta2009} is defined as
\begin{eqnarray*}
D_{\max}(\rho||\sigma)
=\min\set{\lambda\in \real_+:\rho\leq2^\lambda \sigma}.
\end{eqnarray*}
The coherence measure $C_{\max}$ has been proved to play an crucial role in some quantum information processing tasks in Ref. \cite{Bu2017b}.
For multipartite state $\rho\in\cD(\cH^{\ot N})$, $C_{\max}(\rho)=\min_{\sigma_N\in\cI_{1:2:\cdots:N}}D_{\max}(\rho||\sigma_N)$, where the incoherent states
$\cI_{1:2:\cdots:N}$ is the set of incoherent states $\cD(\cH^{\ot N})$ and the state $\sigma_N\in \cI_{1:2:\cdots:N}$ has the following form
\begin{eqnarray}
\sigma_N=\sum_i p_i \sigma_{i,1}\ot\ldots\ot\sigma_{i,N},
\end{eqnarray}
with all $\sigma_{i,k}$ being diagonal in the local basis.

To quantify the coherence in multipartite system more precisely, let us introduce the following coherence measure on bipartite system, which is called max-relative entropy of incoherent-quantum (IQ) coherence. For a bipartite
state $\rho_{AB}\in\cD(\cH_A\ot \cH_B)$,
\begin{eqnarray}
C^{A|B}_{\max}(\rho_{AB})
:=\min_{\sigma_{A|B}\in \cI Q}D_{\max}(\rho_{AB}||\sigma_{A|B}),
\end{eqnarray}
where the set of incoherent-quantum  states $\cI Q$ \cite{Chitambar2016,StreltsovPRX2017} is given by
\begin{eqnarray}\nonumber
\nonumber\cI Q=\{\sigma_{A|B}\in\cD(\cH_A\ot \cH_B)|\sigma_{A|B}=\sum_ip_i\sigma^A_i\ot \tau^B_i,\\ \nonumber
\sigma^A_i~ \text{is incoherent}, \tau^B_i\in\cD(\cH_B)\}.
\end{eqnarray}
As a coherence measure on bipartite systems, $C^{A|B}_{\max}$ satisfies the following properties:
(i) positivity, $C^{A|B}_{\max}(\rho)\geq 0$ and $C^{A|B}_{\max}(\rho)=0$ iff $\rho\in \cI Q$;
(ii) monotonicity under incoherent operation $\Lambda^A_{IO}$ on A side, that is, $C^{A|B}_{\max}(\Lambda^A_{IO}\ot \mathbb{I}(\rho_{AB}))\leq C^{A|B}_{\max}(\rho_{AB})$;
(iii) strong monotonicity under incoherent operation on A side, that is, for incoherent operation   $\Lambda^A_{IO}(\cdot)=\sum_i K^A_i(\cdot)K^{A\dag}_i$
with $K^A_i\cI K^{A\dag}_i\subset\cI$,
$\sum_i p_iC^{A|B}_{\max}(\tilde{\rho}_i)\leq C^{A|B}_{\max}(\rho)$,
where $p_i=\Tr{K^A_i\rho_{AB} K^{A\dag}_i}$ and $\tilde{\rho}_i=K^A_i\rho_{AB} K^{A\dag}_i/p_i$;
(iv) monotonicity under quantum operation $\Lambda^B$ on B side, $C^{A|B}_{\max}(\mathbb{I}\ot \Lambda^B(\rho_{AB}))\leq C^{A|B}_{\max}(\rho_{AB})$;
(v) quasi-convexity, for $\rho_{AB}=\sum^n_ip_i\rho_i$,
$C^{A|B}_{\max}(\rho_{AB})\leq\max_{i}C^{A|B}_{\max}(\rho_i)$.

The proof of above properties can be given in the same way as that of
$C_{\max}$ in Ref. \cite{Bu2017b}.
For any bipartite state $\rho_{AB}$ with
$\rho_A=\Ptr{B}{\rho_{AB}}$, $C^{A|B}_{\max}(\rho_{AB})\geq C_{\max}(\rho_A)$ , which comes directly from the property (iv). If the subsystem B is
trivial, i.e., $dim\cH_B=1$, then $C^{A|B}_{\max}$ reduces to $C_{\max}$ on subsystem $A$.

\textit{\textbf{Maximum advantage achievable in subchannel discrimination with the assistance of a quantum memory.}}-- In the following,
we investigate the information processing task: subchannel discrimination problem, which provides
an operational interpretation of  $C^{A|B}_{\max}$. Subchannel discrimination is an important information task and it tells
us which branch of the evolution a quantum system should go \cite{Piani2015PRL}.

A subchannel $\cE$  is defined to be a linear completely positive and trace non-increasing map, and if
the subchannel $\cE$ is also trace preserving, then $\cE$ is called a channel.
An instrument $\mathfrak{J}=\set{\cE_k}_k$ for a channel $\cE$ is a collection of subchannels $\cE_k$ such that
$\cE=\sum_k\cE_k$ \cite{Piani2015PRL}. An incoherent instrument $\mathfrak{I}^I$ for an IO $\cE$
is a collection of subchannels $\set{\cE_k}_k$ such that $\cE=\sum_k\cE_k$ \cite{Bu2017b}.

Given an bipartite state $\rho_{AB}$ and an instrument $\mathfrak{I}_A=\set{\cE^A_k}_k$ for a quantum channel $\cE^A$ on part A, consider the
joint positive operator valued measure (POVM) $\set{M^{AB}_k}_k$  on AB  with $M^{AB}_k\geq 0$ and $\sum_kM^{AB}_k=\mathbb{I}_{AB}$.
The probability of successfully discriminating subchannels in  $\mathfrak{I}_A$
by joint POVM $\set{M^{AB}_k}_k$ is given by
\begin{eqnarray*}
p_{\text{succ}}(\mathfrak{I}_A, \set{M^{AB}_k}_k, \rho_{AB})=\sum_k\Tr{\cE^A_k(\rho_{AB})M^{AB}_{k}}.
\end{eqnarray*}
And the optimal probability of successfully discriminating subchannels in  $\mathfrak{I}_A$ over all joint POVM is given by
\begin{eqnarray*}
p_{\text{succ}}(\mathfrak{I}_A, \rho_{AB})=\max_{\set{M^{AB}_k}_k}\sum_k\Tr{\cE^A_k(\rho_{AB})M^{AB}_{k}}.
\end{eqnarray*}
If the input states  are restricted to be incoherent on A's side, i.e., $\cI Q$ states, then the optimal probability over all $\cI Q$ states is
\begin{eqnarray*}
p^{IQ}_{\text{succ}}(\mathfrak{I}_A)=\max_{\sigma_{A|B}\in\cI Q} p_{\text{succ}}(\mathfrak{I}_A, \sigma_{A|B}).
\end{eqnarray*}

\begin{thm}\label{eq:op_mean}
Given a bipartite state $\rho_{AB}\in\cD(\cH_A\ot\cH_B)$, it holds that
\begin{eqnarray}
2^{C^{A|B}_{\max}(\rho_{AB})}=\max_{\mathfrak{I}^I_A}
\frac{p_{\text{succ}}(\mathfrak{I}^I_A,\rho_{AB})}{p^{IQ}_{\text{succ}}(\mathfrak{I}^I_A)},
\end{eqnarray}
where the maximization is taken over all the incoherent instrument $\mathfrak{I}^I_A$ on part A.
\end{thm}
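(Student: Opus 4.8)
The plan is to recast $2^{C^{A|B}_{\max}(\rho_{AB})}$ as a semidefinite program, dualize it, and then read both sides of the claimed identity off the primal and the dual. The first and pivotal observation is that, because each incoherent factor $\sigma^A_i$ is diagonal, every $\cI Q$ state collapses to a classical–quantum form $\sigma_{A|B}=\sum_j\proj{j}_A\ot\omega^B_j$ with $\omega^B_j\ge0$. Consequently the cone generated by $\cI Q$ is exactly $\{S\ge0:\Delta_A(S)=S\}$, where $\Delta_A(\cdot)=\sum_j(\proj{j}_A\ot\I_B)(\cdot)(\proj{j}_A\ot\I_B)$ dephases $A$. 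Hence $2^{C^{A|B}_{\max}(\rho_{AB})}=\min\{\Tr{S}: S\ge\rho_{AB},\ \Delta_A(S)=S\}$, the constraint $S\ge0$ being automatic from $S\ge\rho_{AB}\ge0$. A Lagrangian computation, using that $\Delta_A$ is self-adjoint and idempotent, then produces the dual $2^{C^{A|B}_{\max}(\rho_{AB})}=\max\{\Tr{M\rho_{AB}}: M\ge0,\ \Delta_A(M)=\I_{AB}\}$, with strong duality and attainment of an optimizer $M^{\ast}$ guaranteed by Slater's condition (e.g. $S=2\I$ is strictly feasible). The constraint $\Delta_A(M^{\ast})=\I$ says precisely that every diagonal $A$-block of $M^{\ast}$ equals $\I_B$.

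For the easy inequality I would let $\sigma^{\ast}\in\cI Q$ attain $C^{A|B}_{\max}$, so that $\rho_{AB}\le2^{C^{A|B}_{\max}(\rho_{AB})}\sigma^{\ast}$. For any incoherent instrument $\mathfrak{I}^I_A=\{\cE^A_k\}$ and any POVM $\{M_k\}$, complete positivity of each $\cE^A_k\ot\I_B$ preserves this operator inequality; summing $\Tr{\cE^A_k(\cdot)M_k}$ and maximizing the left-hand side over POVMs then gives $p_{\text{succ}}(\mathfrak{I}^I_A,\rho_{AB})\le2^{C^{A|B}_{\max}(\rho_{AB})}\,p_{\text{succ}}(\mathfrak{I}^I_A,\sigma^{\ast})\le2^{C^{A|B}_{\max}(\rho_{AB})}\,p^{IQ}_{\text{succ}}(\mathfrak{I}^I_A)$, so every ratio is bounded by $2^{C^{A|B}_{\max}(\rho_{AB})}$.

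The hard direction is the construction of one instrument that reaches the dual value, and this is where I expect the main difficulty: converting the abstract optimizer $M^{\ast}$ into a physical incoherent instrument together with a valid joint POVM. My plan is the phase-twirl construction. Put $d=\dim\cH_A$, $\zeta=e^{2\pi i/d}$, $Z=\sum_j\zeta^{j}\proj{j}$, and take $\cE^A_a(\cdot)=\tfrac1d Z^a(\cdot)Z^{-a}$ for $a=0,\dots,d-1$; its Kraus operators $\tfrac1{\sqrt d}Z^a$ are incoherent and sum to the dephasing channel $\Delta_A$, so this is a legitimate incoherent instrument. Define $M_a=\tfrac1d(Z^a\ot\I_B)M^{\ast}(Z^{-a}\ot\I_B)\ge0$; the twirl identity $\tfrac1d\sum_a Z^aXZ^{-a}=\Delta_A(X)$ together with $\Delta_A(M^{\ast})=\I$ gives $\sum_a M_a=\I$, so $\{M_a\}$ is a POVM. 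A short cyclic-trace computation yields $\sum_a\Tr{\cE^A_a(\rho_{AB})M_a}=\tfrac1d\Tr{M^{\ast}\rho_{AB}}$, hence $p_{\text{succ}}(\mathfrak{I}^I_A,\rho_{AB})\ge\tfrac1d\Tr{M^{\ast}\rho_{AB}}$. Meanwhile every $\sigma\in\cI Q$ is invariant under the $A$-phase twirl, so $\cE^A_a(\sigma)=\tfrac1d\sigma$ for all $a$: the branches are identical sub-normalized states, forcing $p^{IQ}_{\text{succ}}(\mathfrak{I}^I_A)=\tfrac1d$. Thus the ratio for this instrument is at least $\Tr{M^{\ast}\rho_{AB}}=2^{C^{A|B}_{\max}(\rho_{AB})}$, matching the upper bound.

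What makes the construction work is that two constraints align exactly: $\Delta_A(M^{\ast})=\I$ is precisely the POVM normalization $\sum_a M_a=\I$, and the phase invariance of $\cI Q$ states pins $p^{IQ}_{\text{succ}}$ to the trivial guessing value $1/d$. The two technical points I would verify carefully are that $\cI Q$ genuinely reduces to the block-diagonal form (so the cone is SDP-representable and the dual is as stated) and that the notion of incoherent instrument in the statement admits this random-unitary dephasing instrument; both follow directly from the definitions given in the excerpt.
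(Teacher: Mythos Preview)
Your proof is correct and follows essentially the same route as the paper: the paper's Lemma~\ref{lem:semd_form} is exactly your SDP/dual computation identifying $2^{C^{A|B}_{\max}(\rho_{AB})}=\max\{\Tr{M\rho_{AB}}:M\ge0,\ \Delta_A(M)=\I_{AB}\}$, the easy inequality is argued identically, and the saturating construction is the same phase-twirl instrument $\cE^A_a(\cdot)=\tfrac{1}{d}Z^a(\cdot)Z^{-a}$ (the paper writes $U^A_k=e^{2\pi ikH_A/d_A}$, i.e.\ $Z^k$) together with the POVM obtained by conjugating the dual optimizer. Your choice $M_a=\tfrac{1}{d}Z^aM^{\ast}Z^{-a}$ differs from the paper's $N^{AB}_k=\tfrac{1}{d_A}U^{A\dag}_k\tau_{AB}U^A_k$ only by the direction of the conjugation, and in fact your version makes the one-line trace computation $\sum_a\Tr{\cE^A_a(\rho_{AB})M_a}=\tfrac{1}{d}\Tr{M^{\ast}\rho_{AB}}$ cleaner.
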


The proof of Theorem \ref{eq:op_mean} is presented in Appendix \ref{apen:max_min}. This result illustrates that the maximal advantage of bipartite states in such subchannels discrimination problem
can be exactly captured by $C^{A|B}_{\max}$ , which also
provides an operational interpretation of $C^{A|B}_{\max}$.
As for any bipartite state $\rho_{AB}$ with reduced state $\rho_A$ on subsystem A, $C^{A|B}_{\max}(\rho_{AB})\geq
C_{\max}(\rho_A)$,
this means that  the success probability of discriminating subchanels on part A can be improved with the assistance of a quantum memory B. ( See Fig. \ref{fig2})

\begin{figure}
\centering
\includegraphics[width=70mm]{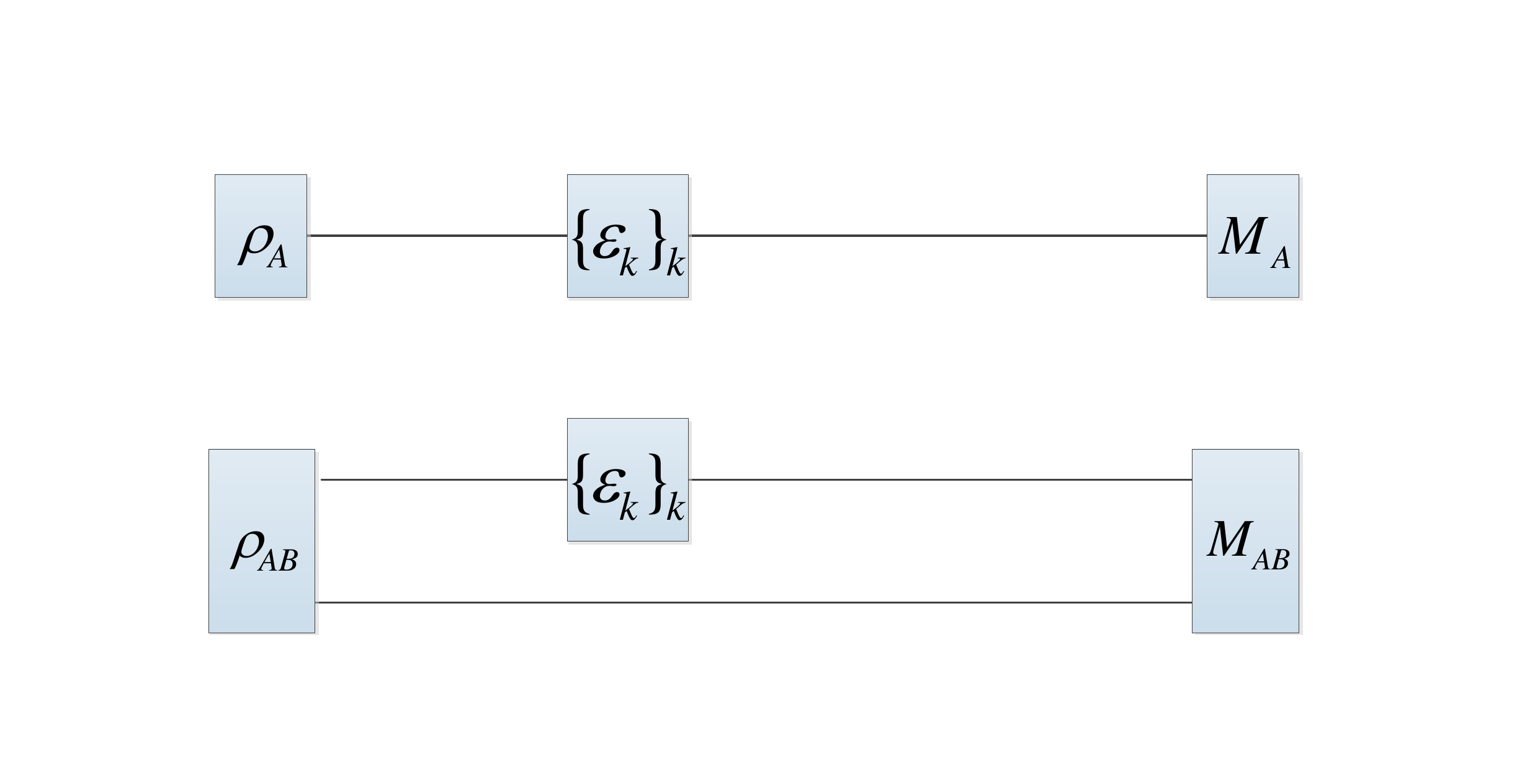}
 \caption{Schematic diagram of subchannels determination problem. In the first scenario, the subchannels are only distinguished by
 local measurement on A side, and the advantage achievable by coherent states is captured by $C_{\max}(\rho_{A})$ \cite{Bu2017b}. In the second scenario,
the subchannels are distinguished by the joint measurement on AB, and the advantage achievable by bipartite states is captured by $C^{A|B}_{\max}(\rho_{AB})$.
Here, $M_A$ (or $M_{AB}$) denotes the measurement on A (or AB).}
  \label{fig2}
\end{figure}

In Ref. \cite{Bu2017b}, the min-relative entropy of coherence $C_{\min}$ has also been defined,
\begin{eqnarray*}
C_{\min}(\rho)=\min_{\sigma\in\cI}D_{\min}(\rho||\sigma),
\end{eqnarray*}
where the min-relative entropy $D_{\min}$ \cite{Datta2009IEEE,Datta2009} is defined as
\begin{eqnarray*}
D_{\min}(\rho||\sigma):=-\log\Tr{\Pi_{\rho}\sigma},
\end{eqnarray*}
with $\Pi_{\rho}$ denoting the projector onto the support $\text{supp} [\rho]$ of $\rho$.
Here, we define the  min-relative entropy of IQ coherence on bipartite states,
\begin{eqnarray}
C^{A|B}_{\min}(\rho_{AB})=\min_{\sigma_{A|B}\in\cI Q}D_{\min}(\rho_{AB}||\sigma_{A|B}).
\end{eqnarray}
Moreover, the relative entropy of IQ coherence measure $C^{A|B}_r$ has also been defined in Ref. \cite{Chitambar2016},
\begin{eqnarray*}
C^{A|B}_{r}(\rho_{AB})
:=\min_{\sigma_{A|B}\in \cI Q}S(\rho_{AB}||\sigma_{A|B}),
\end{eqnarray*}
where $C^{A|B}_r$ plays an important role in the assisted distillation of coherence \cite{Chitambar2016,StreltsovPRX2017}.
Since $D_{\min}(\rho||\sigma)\leq S(\rho||\sigma)\leq D_{\max}(\rho||\sigma)$  for any quantum states $\rho$ and $\sigma$ \cite{Datta2009IEEE}, we have the following relationship,
\begin{eqnarray}
C^{A|B}_{\min}(\rho_{AB})
\leq C^{A|B}_r(\rho_{AB})
\leq C^{A|B}_{\max}(\rho_{AB}).
\end{eqnarray}

Let us introduce the  $\epsilon$-smooth
max- and min-relative entropy of  IQ  coherence as follows,
 \begin{eqnarray}
C^{A|B, \epsilon}_{\max}(\rho_{AB}):&=&\min_{\rho'_{AB}\in B_{\epsilon}(\rho_{AB})}C^{A|B}_{\max}(\rho'_{AB}),\\
C^{A|B,\epsilon}_{\min}(\rho_{AB}):&=&\max_{\substack{
0\leq O_{AB}\leq \mathbb{I}_{AB}\\
\Tr{O_{AB}\rho_{AB}}\geq 1-\epsilon}}
\min_{\sigma_{A|B}\in\cI Q}-\log\Tr{O_{AB}\sigma_{A|B}},~~~~~
\end{eqnarray}
where $B_\epsilon(\rho):=\set{\rho'\geq0:\norm{\rho'-\rho}_1\leq \epsilon, \Tr{\rho'}\leq\Tr{\rho}}$ and
 $\mathbb{I}_{AB}$ denotes the
identity on $\cH_A\ot\cH_B$. By the smooth max- and min-relative entropy of  IQ coherence, the equivalence between
$C^{A|B}_{\max}$, $C^{A|B}_{\min}$ and $C^{A|B}_{r}$ in the asymptotic limit can be also obtained,
\begin{eqnarray}
\lim_{\epsilon\to 0}\lim_{n\to \infty}
\frac{1}{n}C^{A^n|B^n,\epsilon}_{\max \setminus \min}(\rho^{\ot n}_{AB})=C^{A|B}_r(\rho_{AB}).
\end{eqnarray}
The proof of this result is presented in Appendix \ref{apen:prop_mami}.

Now, we are ready to investigate the coherence distribution in multipartite systems. For convenience, we denote
coherence measure $C(A|B):=C^{A|B}(\rho_{AB})$,
$C(AB):=C(\rho_{AB})$, $C^{\epsilon}(A|B):=C^{A|B,\epsilon}(\rho_{AB})$ and $C^{\epsilon}(AB):=C^{\epsilon}(\rho_{AB})$.
Although coherence
is defined to capture the quantumness in a single system, collective coherence between different subsystems needs to be considered in multipartite systems. To quantify the collective coherence between different subsystems, the local coherence needs to be omitted. Thus the minimization over
the incoherent states $\cI_{1:2:\cdots:N}$ needs to be relaxed to the separable states $S_{1:2:\cdots:N}$ \cite{Radhakrishnan2016}, where $\tau_N\in S_{1:2:\cdots:N}$ has the form
$\tau_N=\sum_ip_i\tau_{i,1}\ot \tau_{i,2}\ot\ldots\ot\tau_{i,N}$ with $\tau_{i,k}\in\cD(\cH_k)$.
For an N-partite state $\rho_{A_1\ldots A_N}$, the  max-relative entropy of collective coherence
is defined by
\begin{eqnarray}\nonumber
E^{A_1:\ldots:A_N}_{\max}(\rho_{A_1\ldots A_N})=\min_{\tau_N\in S_{1:2:\cdots:N}}
D_{\max}(\rho_{A_1\ldots A_N}||\tau_N),
\end{eqnarray}
which is the genuine multipartite entanglement among $A_1, \ldots, A_N$ in terms of the max-relative entropy.
The multipartite entanglement measure $E^{A_1:\ldots:A_N}(\rho_{A_1\ldots A_N})$ is denoted by $E(A_1:\ldots:A_N)$ for short in the following context.

Since the max-relative entropy fulfils the triangle inequality, i.e., $D_{\max}(\rho||\sigma)\leq D_{\max}(\rho||\tau)+D_{\max}(\tau||\sigma)$,
we have the following relation for any N-partite state $\rho_{A_1\ldots A_N}$ according to \cite{Radhakrishnan2016},
\begin{eqnarray*}
C_{\max}(A_1\ldots A_N)
\leq E_{\max}(A_1:\ldots :A_N)+C_{\max}(\sigma^{s}_{\min,N}),
\end{eqnarray*}
where $\sigma^{s}_{\min,N}$ is the optimal separable states in $S_{1:2:\cdots:N}$ such that
$E_{\max}(A_1:\ldots: A_N)=D_{\max}(\rho_{A_1\ldots A_N}||\sigma^{s}_{\min,N})$, and the coherence in the
state $\sigma^{s}_{\min,N}$ is called ``local coherence " in Ref. \cite{Radhakrishnan2016}. Besides,
$C_{\max}$ is subadditive for product states, i.e., $C_{\max}(\rho_1\ot\rho_2)\leq C_{\max}(\rho_1)+C_{\max}(\rho_2)$. That is, $C_{\max}$ satisfies
all the requirements except for the symmetry in Ref. \cite{Radhakrishnan2016}. However, the relation between $C(\sigma^{s}_{\min,N})$ and the coherence
$\sum_{k}C(\rho_{A_k})$ is still unclear, where $\rho_{A_k}$  is the reduced state of the k-th subsystem.
We adopt the $C(\rho_{A_k})$ ( or $C(A_k)$ ) to be the local coherence on the k-th subsystem and concentrate on the relation among the total coherence of multipartite state $C(A_1\ldots A_N)$,
the local coherence $\set{C(A_k)}_k$ and the genuine multipartite entanglement $E(A_1:\ldots:A_N)$.

\textit{\textbf{Distribution of coherence in bipartite systems.}}--
Let us begin with bipartite systems, for which we have the following result for any quantum state $\rho_{AB}\in\cD(\cH_A\ot\cH_B)$,
\begin{eqnarray}\label{eq:bi_re}
C_r(AB)\geq C_r(A|B)+C_r(B).
\end{eqnarray}
The proof of this result is based on the following fact,
\begin{eqnarray*}
C^{\epsilon}_{\max}(AB)\geq C^{\epsilon'}_{\max}(A|B)+C^{\epsilon}_{\min}(B),
\end{eqnarray*}
where $\epsilon>0$, $\epsilon'=\epsilon+2\sqrt{\epsilon}$, $C^{\epsilon}_{\max}$ and $C^{\epsilon}_{\min}$ are smooth max- and min-relative entropy of coherence defined in Ref. \cite{Bu2017b}.
The details of the proof is presented in Appendix \ref{appen:pro_dis}.
As $C_r(A|B)\geq C_r(A)$, the relation \eqref{eq:bi_re} is
tighter than the known result  $C_r(AB)\geq C_r(A)+C_r(B)$ in bipartite systems.
This is because $C_r(A|B)$ (or $C_{\max}(A|B)$) contains not only the local coherence on part A, but also the nonlocal correlation
between A and B. In fact, $C_r(A|B)=C_r(A)+S(\rho_A)+\sum_ip_iS(\rho_{B|i})-S(\rho_{AB})\geq C_r(A)+\delta_{A\to B}$, where
$p_i=\Innerm{i}{\rho_A}{i}_A$, $\rho_{B|i}=\Innerm{i}{\rho_{AB}}{i}_A$ and
$\delta_{A\to B}:=S(\rho_A)-S(\rho_{AB})+\min_{\set{\pi^A_i}}\sum_ip_iS(\rho_{B|i})$ is the quantum discord between A and B \cite{Ollivier2001} with $\set{\pi^A_i}$ being the von Neumann measurements on part A.
Thus, it is easy to get the relation : $C_r(AB)\geq C_r(A)+C_r(B)+\delta_{A\to B}$. Similar result can also be obtained under the exchange of labels A and B. Therefore, we obtain the following relation for the
distribution of relative entropy of coherence in bipartite systems,
\begin{equation}
C_r(AB)\geq C_r(A)+C_r(B)+\max\set{\delta_{A\to B},\delta_{B\to A}},
\end{equation}
where $\delta_{A\to B}$ and  $\delta_{B\to A}$ are the corresponding quantum discord of bipartite state $\rho_{AB}$.

\textit{\textbf{Distribution of coherence in multipartite systems.}}--
In multipartite systems, the total coherence of the whole system also contains the nonlocal correlations between the subsystems.
\begin{thm}\label{thm:dis_coh_mul}
Given an N-partite state $\rho_{A_1A_2\ldots A_N}\in\cD(\ot^{N}_{k=1}\cH_{A_k})$ with $N\geq 2$, the total coherence, local coherence and genuine
multipartite entanglement of this state have the following relationship,
\begin{eqnarray}\nonumber
C_r(A_1A_2\ldots A_N)
\geq E^{\infty}_{r}(A_1:A_2:\ldots:A_N)+\sum^N_{i=1}C_r(A_i),\\
\end{eqnarray}
where $E^{\infty}_{r}$ is the regularized relative entropy of entanglement
defined by  $ E^{\infty}_{r}(A_1:A_2:\ldots:A_N)=\lim_{n\to \infty}\frac{1}{n}E_r(\rho^{\ot n}_{A_1:A_2:\ldots:A_N})$ with
$E_r(\rho_{A_1:A_2:\ldots: A_N})=\min_{\tau_N\in S_{1:2:\ldots:N}}S(\rho_{A_1:A_2:\ldots: A_N}||\tau_N)$ \cite{Vedral1997,Vedral1998,Vollbrecht2001}, the minimization goes over all separable states $\tau_N\in S_{1:2:\ldots:N}$.
\end{thm}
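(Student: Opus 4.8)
The plan is to reduce the theorem to a single inequality for the collective coherence and then to extract the entanglement term by a regularization argument. First I would record the exact identity $C_r(\rho)=S(\Delta(\rho))-S(\rho)$ for the global dephasing $\Delta=\Delta_1\ot\cdots\ot\Delta_N$, together with the local versions $C_r(A_i)=S(\Delta_i(\rho_{A_i}))-S(\rho_{A_i})$. Subtracting, and using $(\Delta(\rho))_{A_i}=\Delta_i(\rho_{A_i})$, yields the clean reformulation
\[
C_r(A_1\ldots A_N)-\sum_{i=1}^N C_r(A_i)=I(\rho)-I(\Delta(\rho)),
\]
where $I(\omega)=\sum_i S(\omega_{A_i})-S(\omega)$ is the total correlation. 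Thus the theorem is equivalent to the statement that the \emph{collective coherence} $I(\rho)-I(\Delta(\rho))$ is no smaller than the genuine multipartite entanglement $E^\infty_r(A_1:\ldots:A_N)$. Note that $C_r$ and each $C_r(A_i)$ are additive under tensor powers, so only the entanglement contribution will genuinely require an asymptotic treatment.

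Next I would establish a finite-size inequality of the form
\[
C^{\epsilon}_{\max}(A_1\ldots A_N)\ \geq\ E^{\epsilon'}_{\max}(A_1:\ldots:A_N)+\sum_{i=1}^N C^{\epsilon}_{\min}(A_i),
\]
with $\epsilon'=\epsilon+2\sqrt{\epsilon}$, by induction on $N$ starting from the bipartite lemma $C^{\epsilon}_{\max}(AB)\geq C^{\epsilon'}_{\max}(A|B)+C^{\epsilon}_{\min}(B)$ of the previous section. At each step one peels off a single party as a local $C_{\min}$ contribution while enlarging the reference set appearing in the remaining max-term from incoherent to fully separable states, so that the residual term becomes the smooth max-relative entropy of entanglement $E^{\epsilon'}_{\max}$. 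The semidefinite/smoothing technique used for the bipartite lemma carries over verbatim; the only delicate bookkeeping is the accumulation of smoothing parameters across the $N$ peelings, which stays of order $\sqrt{\epsilon}$ and hence vanishes in the limit.

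I would then substitute $\rho\mapsto\rho^{\ot n}$, divide by $n$, and take $n\to\infty$ followed by $\epsilon\to0$. The normalized smooth max- and min-relative entropies of coherence converge to the relative entropy of coherence (the asymptotic equivalence already recorded in the excerpt), so the left-hand side tends to $C_r(A_1\ldots A_N)$ and each local term to $C_r(A_i)$, the additivity of these quantities making the limit transparent. The entanglement term is controlled by the generalized quantum Stein's lemma, which gives $\lim_{\epsilon\to0}\lim_{n\to\infty}\tfrac1n E^{\epsilon}_{\max}(A_1^n:\ldots:A_N^n)=E^\infty_r(A_1:\ldots:A_N)$. Passing to the limit in the finite-size inequality then yields the claimed bound.

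The main obstacle is twofold. On the technical side it is the multipartite single-shot inequality: the naive iteration of the bipartite bound together with $C_r(A|B)\geq C_r(A)$ would merely reproduce $C_r(A_1\ldots A_N)\geq\sum_i C_r(A_i)$ and discard all cross-party correlations, so one must arrange the peeling so that these correlations are retained inside a single fully-separable reference, and then control the compounded smoothing so that $\epsilon'\to0$. On the conceptual side it is exactly why the \emph{regularized} entanglement appears: since $E_{\max}$ (equivalently $E_r$) is not additive, the single-copy collective coherence can fall strictly below $E_r(\rho)$, and only its regularization survives the limit; invoking the generalized quantum Stein's lemma \cite{Brandao2011} — whose hypotheses require the set of separable states to be closed under tensor products and partial traces — is therefore the crux that converts the finite-size entanglement term into $E^\infty_r$.
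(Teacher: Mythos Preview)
Your proposal is correct and follows essentially the same route as the paper: one first establishes the single-shot inequality $C^{\epsilon}_{\max}(A_1\ldots A_N)\geq E^{\epsilon''}_{\max}(A_1:\ldots:A_N)+\sum_i C^{\epsilon_i}_{\min}(A_i)$ by peeling off one party via the bipartite lemma and then converting the remaining $\cI Q$-reference into a fully separable one (the paper packages this second step as a separate lemma, $C^{\epsilon}_{\max}(AB|C)\geq E^{\epsilon''}_{\max}(A:B:C)+C^{\epsilon}_{\min}(A)+C^{\epsilon}_{\min}(B)$, with cascading parameters $\epsilon_1=\epsilon+2\sqrt{\epsilon}$, $\epsilon_2=\epsilon_1+2\sqrt{2\epsilon_1}$), and then regularizes and invokes the asymptotic equivalences $C^{\epsilon}_{\max/\min}\to C_r$ and $E^{\epsilon}_{\max}\to E^{\infty}_r$. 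Your opening reformulation via the total correlation $I(\rho)-I(\Delta(\rho))$ is a pleasant observation but plays no role in either argument.
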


To prove  Theorem \ref{thm:dis_coh_mul}, we only need to prove the case $N=3$, which depends on the following relation,
\begin{eqnarray*}
C^{\epsilon}_{\max}(ABC)
&\geq& E^{\epsilon_2}_{\max}(A:B:C)+C^{\epsilon_1}_{\min}(A)\\
&&+C^{\epsilon_1}_{\min}(B)+C^{\epsilon}_{\min}(C),
\end{eqnarray*}
where $\epsilon_1=\epsilon+2\sqrt{\epsilon}$,
$\epsilon_2=\epsilon_1+2\sqrt{2\epsilon_1}$ and $E^{\epsilon}_{\max}$ is the smooth max-relative entanglement defined in Refs. \cite{Datta2009,Brandao2011}.
The details of the proof for  Theorem \ref{thm:dis_coh_mul} is presented in Appendix  \ref{appen:pro_dis}. This theorem illustrates that the total coherence in multipartite system contains not only the
local coherence in each subsystem, but also the genuine multipartite entanglement among the multipartite systems,  where the
multipartite entanglement quantifies the collective coherence among these subsystems.

Now, we consider the distribution of entanglement (or collective coherence) in multipartite system.
Although the relative entropy of entanglement and its regularized version do not have the monogamy relation in general \cite{Winter2016Entdis}, the genuine
multipartite entanglement for any tripartite state $\rho_{ABC}\in\cD(\cH_A\ot\cH_B\ot\cH_C)$ can be decomposed into bipartite entanglement as follows,
\begin{eqnarray}
\label{eq:ent1}E^{\infty}_r(A:B:C)&\geq& E^{\infty}_{r}(A:BC)+E^{\infty}_r(B:C),\\
\label{eq:ent2}E^{\infty}_r(A:B:C)&\geq& E^{\infty}_{r}(A:B)+E^{\infty}_r(B:C).
\end{eqnarray}
The proof of this result is presented in Appendix \ref{appen:pro_dis}.
Note that the relation \eqref{eq:ent1} is also true for any N-partite systems, i.e.,
\begin{eqnarray*}
E^{\infty}_r(A_1:\ldots:A_{N-1}: A_N) & \geq & E^{\infty}_{r}(A_1:\ldots:A_{N-1}A_N)\\
&& +E^{\infty}_r(A_{N-1}:A_N).
\end{eqnarray*}
That is, the genuine N-partite entanglement can be decomposed into
the $(N-1)$-partite entanglement and bipartite entanglement.
Moreover, since
the relation \eqref{eq:ent2} holds under the exchange of labels A, B and C, we have the
following relation for the distribution of  entanglement in tripartite systems:
\begin{eqnarray*}
E^{\infty}_r(A:B:C) &\geq & \frac{2}{3}[E^{\infty}_r(A:B)\\
&& +E^{\infty}_r(A:C)+E^{\infty}_r(B:C)].
\end{eqnarray*}


\textit{\textbf{Monogamy relation for IQ coherence measures.}}--
It has been shown in Refs. \cite{Yao2015,Kumar2017} that,
for relative entropy of coherence $C_r$, there exists some tripartite $\rho_{ABC}$
such that
\begin{eqnarray}\label{eq:nmon_r}
C_r(ABC)\leq C_r(AB)+C_r(AC).
\end{eqnarray}
That is, the monogamy relation for $C_r$ does not hold in general.
There are several reasons behind the failure of monogamy relation for $C_r$. One is the following fact:
the right hand side of \eqref{eq:nmon_r} contains two copies of local coherence $C_r(A)$, while the left hand side only contains one copy of $C_r(A)$. If the parts $B$ and $C$
are weakly correlated ( e.g. $\rho_{ABC}=\rho_{A_1B}\ot\rho_{A_2C}$ with $\cH_{A_1}\ot\cH_{A_2}=\cH_A$), then the one more copy of $C_r(A)$ will result in the failure of monogamy.
Thus, to circumvent this problem,  we define the monogamy of coherence for an $N+1$-part state $\rho_{A_1\ldots A_N B}$ in terms of the IQ coherence measure $C^{A|B}_r$ as follows,
\begin{eqnarray}
M=\sum^N_{k=1}C_r(A_k|B)-C_r(A_1\ldots A_N|B).
\end{eqnarray}
It is monogamous for $M\leq0$ and polymonogamous for $M>0$.
Here,  we obtain the following monogamy of coherence in $N+1$-partite systems.

\begin{thm}\label{thm:mon_coh}
For any N+1-partite state $\rho_{A_1\ldots A_N B}$, it holds that
\begin{eqnarray}\label{eq:mon_coh1}
C_r(A_1A_2\ldots A_N|B)\geq \sum^N_{i=1}C_r(A_i|B).
\end{eqnarray}
\end{thm}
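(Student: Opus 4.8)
The plan is to reduce the statement to the closed form of the relative entropy of IQ coherence and then dephase the subsystems $A_1,\ldots,A_N$ one at a time, telescoping the resulting entropy differences and controlling each of them with the data-processing inequality. Let $\Delta_{A_k}$ denote the dephasing $\Delta$ applied to the factor $A_k$ in the reference basis, and write $\Delta_{A_1\ldots A_N}=\Delta_{A_1}\ot\cdots\ot\Delta_{A_N}$ for the full dephasing of the $A$-register. For $\rho:=\rho_{A_1\ldots A_N B}$ every $\sigma\in\cI Q$ (with $A_1\ldots A_N$ the incoherent register) is a classical-quantum state $\sigma=\sum_{\vec j}\proj{\vec j}\ot\eta_{\vec j}$; expanding $S(\rho||\sigma)=-S(\rho)-\Tr{\rho\log\sigma}$ and using the nonnegativity of (subnormalized) relative entropy shows that $\Tr{\rho\log\sigma}$ is maximal at $\eta_{\vec j}=\Innerm{\vec j}{\rho}{\vec j}$, i.e. at $\sigma=\Delta_{A_1\ldots A_N}(\rho)$. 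Consistently with the bipartite formula recalled above, this yields $C_r(A_1\ldots A_N|B)=S(\Delta_{A_1\ldots A_N}(\rho))-S(\rho)$, and likewise $C_r(A_i|B)=S(\Delta_{A_i}(\rho_{A_iB}))-S(\rho_{A_iB})$ for the reduced state $\rho_{A_iB}$.

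Next I would telescope. Setting $\rho_{N+1}:=\rho$ and $\rho_k:=\Delta_{A_k}(\rho_{k+1})=\Delta_{A_k}\cdots\Delta_{A_N}(\rho)$ for $k=N,\ldots,1$, so that $\rho_1=\Delta_{A_1\ldots A_N}(\rho)$, I obtain
\[
C_r(A_1\ldots A_N|B)=S(\rho_1)-S(\rho_{N+1})=\sum_{k=1}^N\left[S(\Delta_{A_k}(\rho_{k+1}))-S(\rho_{k+1})\right].
\]
Each summand is the relative entropy of coherence of the single factor $A_k$ evaluated in the partially dephased state $\rho_{k+1}$: since $\log\Delta_{A_k}(\rho_{k+1})$ is block diagonal in the $A_k$ basis one has $\Tr{\rho_{k+1}\log\Delta_{A_k}(\rho_{k+1})}=\Tr{\Delta_{A_k}(\rho_{k+1})\log\Delta_{A_k}(\rho_{k+1})}$, hence $S(\Delta_{A_k}(\rho_{k+1}))-S(\rho_{k+1})=S(\rho_{k+1}||\Delta_{A_k}(\rho_{k+1}))$.

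Finally I would bound each term below by the corresponding single-system term. Let $\mathcal T_k=\trace_{A_1\ldots A_{k-1}A_{k+1}\ldots A_N}$ be the partial trace retaining only $A_k$ and $B$. Because the partial trace absorbs the dephasing of the traced-out factors ($\trace_{A_j}\circ\Delta_{A_j}=\trace_{A_j}$) and commutes with $\Delta_{A_k}$, we have $\mathcal T_k(\rho_{k+1})=\rho_{A_kB}$ and $\mathcal T_k(\Delta_{A_k}(\rho_{k+1}))=\Delta_{A_k}(\rho_{A_kB})$. The data-processing inequality then gives $S(\rho_{k+1}||\Delta_{A_k}(\rho_{k+1}))\geq S(\rho_{A_kB}||\Delta_{A_k}(\rho_{A_kB}))$, and since $\Delta_{A_k}(\rho_{A_kB})\in\cI Q$ is a feasible incoherent-quantum state the right-hand side is at least $C_r(A_k|B)$. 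Summing over $k$ gives $C_r(A_1\ldots A_N|B)\geq\sum_{k=1}^N C_r(A_k|B)$, which is the claim.

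The step I expect to be the main obstacle is establishing the closed form (the optimality of the dephased state) for the left-hand side; note that the single-system terms only require $\Delta_{A_k}(\rho_{A_kB})$ to be a feasible IQ state rather than the optimal one, so once the closed form and the commutation of the partial trace with the partial dephasings are in hand, the telescoping and the data-processing estimate are routine.
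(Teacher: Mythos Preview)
Your proof is correct, and it follows a genuinely different (and considerably more elementary) route than the paper.

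The paper proves Theorem~\ref{thm:mon_coh} by first establishing a one-shot inequality for the smooth max- and min-relative entropies of IQ coherence (Lemma~\ref{lem:mon_coh}):
\[
C^{\epsilon}_{\max}(AB|C)\geq C^{\epsilon'}_{\max}(A|C)+C^{\epsilon}_{\min}(B|C),
\]
via an operator-sandwich argument together with the gentle operator lemma, and then invoking the asymptotic equivalence $\lim_{\epsilon\to0}\lim_{n\to\infty}\tfrac1nC^{\epsilon}_{\max/\min}=C_r$ (which in turn rests on the generalized Quantum Stein Lemma). Iterating over the $A_i$ gives the $N$-party statement.

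Your argument bypasses all the smooth-entropy machinery: once the closed form $C_r(A_1\ldots A_N|B)=S(\Delta_{A_1\ldots A_N}(\rho))-S(\rho)$ is in hand (which the paper itself uses, e.g.\ in the proof of Eq.~\eqref{eq:re_condi} and in Appendix~\ref{apen:prop_mami}), the telescoping plus data processing under the partial trace immediately yield the inequality, and in fact each lower bound $S(\rho_{A_kB}\|\Delta_{A_k}(\rho_{A_kB}))$ equals $C_r(A_k|B)$ rather than merely bounding it. So the ``main obstacle'' you flag is not one: the optimality of the dephased state is standard and already used elsewhere in the paper.

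What the paper's route buys is the single-shot Lemma~\ref{lem:mon_coh}, which is of independent interest and fits into the same smooth-entropy framework used for the coherence-distribution and entanglement-distribution results. What your route buys is a short, self-contained proof that needs nothing beyond monotonicity of relative entropy under CPTP maps.
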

The proof of Theorem \ref{thm:mon_coh} is presented in Appendix \ref{apen:mon_coh}.
As $C_r(AB|C)\geq C_r(AB)+E^{\infty}_r(AB:C)\geq C_r(A)+C_r(B)+E^{\infty}_r(A:B)+E^{\infty}_r(AB:C)$, i.e., $C_r(AB|C)$ contains the nonlocal correlation between
$A$ and $B$,  we consider the relation between the quantity $C_r(AB|C)-C_r(A|C)-C_r(B|C)$
and the nonlocal correlation between A and B.
For any tripartite state $\rho_{ABC}$,  the
following relationship holds,
\begin{equation}\label{eq:re_condi}
C_r(AB|C)
\leq C_r(A|C)+C_r(B|C)+I(A:B|C)_{\rho},
\end{equation}
where
$I(A:B|C)_{\rho}:=S(\rho_{AC})+S(\rho_{BC})-S(\rho_C)-S(\rho_{ABC})$ is the conditional mutual information of $\rho_{ABC}$ which quantifies the correlation between subsystems A and B with respect to C and  can be used to define the squashed entanglement \cite{Christandl2004,Tucci2002,BrandaoCMP2011} (See the proof in  Appendix \ref{apen:mon_coh}).

One of the basic properties of the relative entropy of entanglement $E_r$ distinguishing from other entanglement measures is the non-lockability \cite{HorodeckiPRL2005},
that is,  the loss of  entanglement is proportional to
the number of qubits traced out when some part of the whole system is discarded,
where this relation  can be improved for the regularized relative entropy of entanglement $E^{\infty}_r$ \cite{BrandaoCMP2011} as follows,
\begin{eqnarray}\nonumber
E^{\infty}_r(A:BC)- E^{\infty}_r(A:C)\leq I(A:B|C)_{\rho}.
\end{eqnarray}
For the relative entropy of IQ coherence measure, the conditional mutual information $I(A:B|C)_{\rho}$
 also provides an upper bound for the loss of coherence after some  subsystem is discarded. That is,  for any tripartite state $\rho_{ABC}$,
\begin{eqnarray}\label{eq:cond_r_bi}
C_r(A|BC)-C_r(A|C)\leq I(A:B|C)_{\rho},
\end{eqnarray}
 which can be regarded as the non-lockability of relative entropy of IQ coherence measure. The proof of
 \eqref{eq:cond_r_bi} is presented in Appendix \ref{apen:mon_coh}.

\section{Conclusion}
Understanding the distribution of quantum coherence in multipartite systems is of fundamental importance.
We have investigated the distribution of coherence in multipartite systems by introducing incoherent-quantum (IQ) coherence measures on bipartite systems in terms of the max- and min-relative entropies. It has been found that the max-relative entropy of IQ coherence characterizes maximal advantage of the bipartite in certain subchannel discrimination problems. Moreover, it has been shown
that the total coherence of a multipartite system is lower bounded by the sum of local coherence and the genuine multipartite entanglement. From the IQ coherence measures, we have obtained the monogamy relation of coherence in multipartite systems.

Our results reveal the distribution of quantum coherence in multipartite systems,
which substantially advance the understanding of the physical law that governs the distribution of quantum correlations in
multipartite systems and pave the way for the further researches in this direction. This will also have deep implications in quantum information processing, quantum biology, quantum thermodynamics
and other related areas of physics as well.

\medskip
\begin{acknowledgments}
J.D. Wu is supported by the Natural Science Foundation of China (Grants No. 11171301, No. 10771191, and No. 11571307) and the Doctoral Programs Foundation of the Ministry
of Education of China (Grant No. J20130061).
S.M. Fei is supported by the Natural Science Foundation of China under No. 11675113.
\end{acknowledgments}

\bibliographystyle{apsrev4-1}
\bibliography{Maxcoh-lit}
\appendix
\section{Operational interpretation of  $C^{A|B}_{\max}$}\label{apen:max_min}
\begin{lem}\label{lem:semd_form}
Given a bipartite quantum state $\rho_{AB}\in\cD(\cH_{A}\ot\cH_B)$, $C^{A|B}_{\max}(\rho_{AB})$ can be expressed as
\begin{eqnarray}
2^{C^{A|B}_{\max}(\rho_{AB})}=\max_{\substack{
\tau_{AB}\geq0\\
\Delta_{A}\ot \mathbb{I}_B(\tau_{AB})=\mathbb{I}_{AB}
}}\Tr{\rho_{AB}\tau_{AB}}.
\end{eqnarray}
\end{lem}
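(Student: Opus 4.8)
The plan is to recast $2^{C^{A|B}_{\max}(\rho_{AB})}$ as a semidefinite program and pass to its dual, where the dual turns out to be exactly the stated maximization. First I would unfold the definitions: since $2^{D_{\max}(\rho\|\sigma)}=\min\set{t\geq0:\rho\leq t\sigma}$, writing $Y_{AB}=t\sigma_{A|B}$ and using $\Tr{\sigma_{A|B}}=1$ gives
\begin{eqnarray*}
2^{C^{A|B}_{\max}(\rho_{AB})}=\min\set{\Tr{Y_{AB}}:Y_{AB}\in\mathcal{K},\ Y_{AB}\geq\rho_{AB}},
\end{eqnarray*}
where $\mathcal{K}$ is the cone generated by the $\cI Q$ states (positivity of $Y_{AB}$ being automatic from $Y_{AB}\geq\rho_{AB}\geq0$).

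The key structural step is to identify $\mathcal{K}$. Since every incoherent $\sigma^A_i$ is diagonal, a generic element $\sum_i\sigma^A_i\ot\tau^B_i$ regroups as $\sum_j\proj{j}_A\ot R_j$ with each $R_j\geq0$; hence $\mathcal{K}$ is precisely the set of positive operators that are block-diagonal in the $A$-basis, i.e. those fixed by the dephasing channel on $A$,
\begin{eqnarray*}
\mathcal{K}=\set{Y_{AB}\geq0:(\Delta_A\ot\mathbb{I}_B)(Y_{AB})=Y_{AB}}.
\end{eqnarray*}
This turns the primal into a genuine SDP: minimize $\Tr{Y_{AB}}$ subject to $Y_{AB}-\rho_{AB}\geq0$ and $(\Delta_A\ot\mathbb{I}_B)(Y_{AB})=Y_{AB}$.

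Next I would form the Lagrangian, attaching a positive multiplier $\tau_{AB}\geq0$ to the inequality $Y_{AB}\geq\rho_{AB}$ and a Hermitian multiplier $Z_{AB}$ to the equality. Using that $\Delta_A$ is a self-adjoint idempotent superoperator (so $\Tr{Z(\Delta_A\ot\mathbb{I}_B)(Y)}=\Tr{(\Delta_A\ot\mathbb{I}_B)(Z)Y}$), collecting the terms linear in $Y_{AB}$ forces the stationarity condition $\tau_{AB}=\mathbb{I}_{AB}+Z_{AB}-(\Delta_A\ot\mathbb{I}_B)(Z_{AB})$, whereupon the Lagrangian collapses to $\Tr{\rho_{AB}\tau_{AB}}$. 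The remaining simplification is the crux: as $Z_{AB}$ ranges over all Hermitian operators, $Z_{AB}-(\Delta_A\ot\mathbb{I}_B)(Z_{AB})$ ranges over exactly the Hermitian operators annihilated by $\Delta_A\ot\mathbb{I}_B$ (one inclusion uses idempotence, the other takes $Z_{AB}=W$). Consequently the constraint is equivalent to $\tau_{AB}\geq0$ together with $(\Delta_A\ot\mathbb{I}_B)(\tau_{AB})=\mathbb{I}_{AB}$, which is precisely the feasible set in the claim.

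Finally, strong duality with attainment of the maximum follows from Slater's condition: the dual point $\tau_{AB}=\mathbb{I}_{AB}$ is strictly feasible because $\Delta_A(\mathbb{I}_A)=\mathbb{I}_A$, and the primal admits the strictly feasible point $Y_{AB}=c\,\mathbb{I}_{AB}$ for $c$ large. I expect the main obstacle to lie in the cone identification and the dual-constraint simplification rather than in the duality itself—specifically, verifying that the regrouping $\sum_i\sigma^A_i\ot\tau^B_i=\sum_j\proj{j}_A\ot R_j$ exhausts all $A$-block-diagonal positive operators, and that the image of $Z\mapsto Z-(\Delta_A\ot\mathbb{I}_B)(Z)$ coincides with the kernel of $\Delta_A\ot\mathbb{I}_B$.
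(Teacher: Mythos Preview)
Your argument is correct and complete. The approach differs from the paper's in the choice of primal: the paper first rewrites
\[
2^{C^{A|B}_{\max}(\rho_{AB})}=\min_{\substack{\sigma_{AB}\geq0\\ (\Delta_A\ot\mathbb{I}_B)(\sigma_{AB})\geq\rho_{AB}}}\Tr{\sigma_{AB}},
\]
i.e.\ it relaxes the variable to an arbitrary positive operator and moves the $\cI Q$ structure entirely into the inequality constraint via $\Delta_A$. This gives a standard-form SDP whose dual constraint is $(\Delta_A\ot\mathbb{I}_B)(\tau_{AB})\leq\mathbb{I}_{AB}$, and the paper then argues separately (by padding $\tau_{AB}\mapsto\tau_{AB}+\mathbb{I}_{AB}-(\Delta_A\ot\mathbb{I}_B)(\tau_{AB})$) that the inequality may be replaced by equality. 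You instead keep the $\cI Q$ membership as the linear equality $(\Delta_A\ot\mathbb{I}_B)(Y_{AB})=Y_{AB}$, which forces you to carry a Hermitian multiplier $Z_{AB}$ and to verify that the range of $Z\mapsto Z-(\Delta_A\ot\mathbb{I}_B)(Z)$ is exactly $\ker(\Delta_A\ot\mathbb{I}_B)$; in exchange the equality $(\Delta_A\ot\mathbb{I}_B)(\tau_{AB})=\mathbb{I}_{AB}$ drops out directly with no extra padding step. Both routes rest on the same cone identification (positive operators fixed by $\Delta_A\ot\mathbb{I}_B$ are exactly the $\cI Q$ cone) and on Slater's condition for strong duality; the paper's version trades a slightly slicker duality computation for an additional post-processing step, while yours absorbs that step into the Lagrangian analysis.
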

\begin{proof}
Due to the definition of $C^{A|B}_{\max}$, we have $2^{C^{A|B}_{\max}(\rho_{AB})}=\min_{\substack{
\sigma_{AB}\geq0\\
\Delta_A\ot \mathbb{I}_B(\sigma_{AB})\geq \rho_{AB}
}}\Tr{\sigma_{AB}}$. Thus, to prove the result, we only need to prove
\begin{eqnarray*}
\min_{\substack{
\sigma_{AB}\geq0\\
\Delta_A\ot \mathbb{I}_B(\sigma_{AB})\geq \rho_{AB}
}}\Tr{\sigma_{AB}}=\max_{\substack{
\tau_{AB}\geq0\\
\Delta_{A}\ot \mathbb{I}_B(\tau_{AB})=\mathbb{I}_{AB}
}}\Tr{\rho_{AB}\tau_{AB}}.
\end{eqnarray*}

First, it is easy to see that
\begin{eqnarray*}
\max_{\substack{
\tau_{AB}\geq0\\
\Delta_A\ot \mathbb{I}_B(\tau_{AB})=\mathbb{I}_{AB}
}}\Tr{\rho_{AB}\tau_{AB}}
=\max_{\substack{
\tau_{AB}\geq0\\
\Delta_A\ot \mathbb{I}_B(\tau_{AB})\leq \mathbb{I}_{AB}
}}\Tr{\rho_{AB}\tau_{AB}},
\end{eqnarray*}
as for any positive operator $\tau_{AB}\geq0$ with $\Delta_A(\tau_{AB})\leq \mathbb{I}_{AB}$, we can always choose $\tau'_{AB}=\tau_{AB}+\mathbb{I}_{AB}-\Delta_A\ot \mathbb{I}_B(\tau_{AB})\geq0$, then  $\Delta_A\ot \mathbb{I}_B(\tau_{AB})=\mathbb{I}_{AB}$ and
$\Tr{\rho_{AB}\tau'_{AB}}\geq \Tr{\rho_{AB}\tau_{AB}}$.

Next, we prove that
\begin{eqnarray}\label{eq:SDP1}
\min_{\substack{
\sigma_{AB}\geq0\\
\Delta_A\ot \mathbb{I}_B(\sigma_{AB})\geq \rho_{AB}
}}\Tr{\sigma_{AB}}=\max_{\substack{
\tau\geq0\\
\Delta_A\ot \mathbb{I}_B(\tau_{AB})\leq \mathbb{I}_{AB}
}}\Tr{\rho_{AB}\tau_{AB}}.~~~~~~~~
\end{eqnarray}
The left side of equation \eqref{eq:SDP1} can be expressed as the following semidefinite
programming (SDP)
\begin{eqnarray*}
\min \Tr{C_1\sigma_{AB}},\\
\text{s.t.}~~ \Lambda(\sigma_{AB})\geq C_2,\\
\sigma_{AB}\geq 0,
\end{eqnarray*}
where $C_1=\mathbb{I}$, $C_2=\rho_{AB}$ and $\Lambda=\Delta_{A}\ot \mathbb{I}_B$. Then the dual SDP is given by
\begin{eqnarray*}
\max \Tr{C_2\tau_{AB}},\\
\text{s.t.}~~ \Lambda^\dag(\tau_{AB})\leq C_1,\\
\tau_{AB}\geq 0.
\end{eqnarray*}
That is,
\begin{eqnarray*}
\max \Tr{\rho_{AB}\tau_{AB}},\\
\text{s.t.}~~ \Delta_A\ot \mathbb{I}_B(\tau_{AB})\leq \mathbb{I}_{AB},\\
\tau_{AB}\geq 0.
\end{eqnarray*}
Note that the dual is strictly feasible as we only need to choose $\sigma_{AB}=2\lambda_{\max}(\rho_{AB}) \mathbb{I}_{AB}$, where $\lambda_{\max}(\rho_{AB})$ is the
 maximum eigenvalue of $\rho_{AB}$. Thus, strong duality holds, and the equation \eqref{eq:SDP1} is proved.

\end{proof}

\begin{mproof}[Proof of Theorem \ref{eq:op_mean}]
Due to the definition of $C^{A|B}_{\max}$, there exists an $\cI Q$ state $\sigma_{A|B}$
such that $\rho\leq 2^{C^{A|B}_{\max}(\rho_{AB})}\sigma_{A|B}$. Thus, for any
incoherent instrument $\mathfrak{I}^I_A$ and joint
POVM $\set{M^{AB}_k}_k$,
\begin{eqnarray*}
&&p_{\text{succ}}(\mathfrak{I}^I_A,\set{M^{AB}_k}_k,\rho_{AB})\\
&\leq& 2^{C^{A|B}_{\max}(\rho_{AB})} p_{\text{succ}}(\mathfrak{I}^I_A, \set{M^{AB}_k}_k, \sigma_{A|B}),
\end{eqnarray*}
that is,
\begin{eqnarray}\label{eq:dis_eql}
p_{\text{succ}}(\mathfrak{I}^I_A,\rho_{AB})\leq 2^{C^{A|B}_{\max}(\rho_{AB})} p^{IQ}_{\text{succ}}(\mathfrak{I}^I_A).
\end{eqnarray}

Now, consider a special incoherent instrument such that the equality \eqref{eq:dis_eql} holds.
Let us take the incoherent instrument  $\widetilde{\mathfrak{I}}^I_A=\set{\widetilde{\cE}^A_k}_k$
as $\widetilde{\cE}^A_k(\cdot)=\frac{1}{d_A}U^A_k(\cdot) U^{A\dag}_k$ with $U^A_k=e^{i\frac{2k\pi}{d_A}H_A}$ and $H_A=\sum_{j}j\proj{j}_A$.
Then, for any $\cI Q$ state $\sigma_{A|B}$,
\begin{eqnarray*}
&&p_{\text{succ}}(\widetilde{\mathfrak{I}}^I_A, \set{M^{AB}_k},\sigma_{A|B})\\
&=&\sum_{k}\frac{1}{d_A}\Tr{U^A_k\sigma_{A|B}U^{A\dag}_kM^{AB}_k}\\
&=&\frac{1}{d_A}\Tr{\sigma_{A|B}\sum_kU^{A\dag}_kM^{AB}_kU^A_k}\\
&=&\frac{1}{d_A}\Tr{\Delta_A\ot \mathbb{I}_B(\sigma_{A|B})\sum_kU^{A\dag}_kM^{AB}_kU^A_k}\\
&=&\frac{1}{d_A}\Tr{\sigma_{AB}\Delta_A\ot \mathbb{I}_B(\sum_kU^{A\dag}_kM^{AB}_kU^A_k)}\\
&=&\frac{1}{d_A}\Tr{\sigma_{AB}}
=\frac{1}{d_A},
\end{eqnarray*}
where the forth line comes from the fact that $\Delta_A\ot \mathbb{I}_B(\sigma_{A|B})=\sigma_{A|B}$  for any $\cI Q$ state $\sigma_{A|B}$,
the sixth line comes from the fact that $U^A_k$ is diagonal in the local basis $\set{\proj{j}_A}^{d_A}_{j=1}$ and
\begin{eqnarray*}
&&\Delta_A\ot \mathbb{I}_B(\sum_kU^{A\dag}_kM^{AB}_kU^A_k)\\
&=&\sum_k U^{A\dag}_k\Delta_A\ot I_B(M^{AB}_k)U^A_k \\
&=&\sum_k\Delta_A\ot \mathbb{I}_B(M^{AB}_k)\\
&=&\Delta_A\ot \mathbb{I}_B(\sum_k M^{AB}_k)\\
&=&\Delta_A\ot \mathbb{I}_B(\mathbb{I}_{AB})=\mathbb{I}_{AB}.
\end{eqnarray*}

Besides, according to Lemma \ref{lem:semd_form}, there exists an positive operator $\tau_{AB}$ with $\Delta_A\ot \mathbb{I}_B(\tau_{AB})=\mathbb{I}_{AB}$,
such that $2^{C^{A|B}_{\max}(\rho_{AB})}=\Tr{\rho_{AB}\tau_{AB}}$. Define $N^{AB}_k$ to be
$N^{AB}_{k}=\frac{1}{d_A}U^{A\dag}_k\tau_{AB}U^A_k$, then $N^{AB}_{k}\geq 0$ and $\sum_kN^{AB}_k=\frac{1}{d_A}\sum_kU^{A\dag}_k\tau_{AB}U^A_k=\Delta_A\ot \mathbb{I}_B(\tau_{AB})=\mathbb{I}_{AB}$, that is, $\set{N^{AB}_{k}}$ is a joint POVM on A and B. Hence
\begin{eqnarray*}
&&p_{\text{succ}}(\widetilde{\mathfrak{I}}^I_A, \set{N^{AB}_k},\rho_{AB})\\
&=&\sum_k\frac{1}{d_A}\Tr{U^A_k\rho_{AB}U^{A\dag}_kN^{AB}_k}\\
&=&\sum_k\frac{1}{d^2_A}\Tr{\rho_{AB}\tau_{AB}}\\
&=&\frac{1}{d_A}\Tr{\rho_{AB}\tau_{AB}}\\
&=&\frac{2^{C^{A|B}_{\max}(\rho_{AB})}}{d_A}\\
&=&2^{C^{A|B}_{\max}(\rho_{AB})}p^{IQ}_{\text{succ}}(\widetilde{\mathfrak{I}}^I_A).
\end{eqnarray*}
That is,
\begin{eqnarray}
\frac{p_{\text{succ}}(\widetilde{\mathfrak{I}}^I_A,\rho_{AB})}{p^{IQ}_{\text{succ}}(\widetilde{\mathfrak{I}}^I_A)}
\geq2^{C^{A|B}_{\max}(\rho_{AB})}.
\end{eqnarray}

\end{mproof}

\section{Properties of smooth IQ max- and min- relative entropies}\label{apen:prop_mami}
Due to the definition of  smooth max-relative entropy of IQ coherence measure, it can also be rewritten as
\begin{eqnarray*}
C^{A|B,\epsilon}_{\max}(\rho_{AB})&=&\min_{\rho'_{AB}\in B_{\epsilon}(\rho_{AB})}\min_{\sigma_{A|B}\in \cI Q}D_{\max}(\rho'_{AB}||\sigma_{A|B})\\
&=&\min_{\sigma_{A|B}\in\cI Q} D^{\epsilon}_{\max}(\rho_{AB}||\sigma_{A|B}),
\end{eqnarray*}
where $D^{\epsilon}_{\max}(\rho||\sigma)$  is the smooth max-relative entropy \cite{Datta2009IEEE,Datta2009,Brandao2011} defined as
\begin{eqnarray}\label{eq:s_max}
D^\epsilon_{\max}(\rho||\sigma)=\inf_{\rho'\in B_{\epsilon}(\rho)}D_{\min}(\rho'||\sigma),
\end{eqnarray}
and $B_\epsilon(\rho):=\set{\rho'\geq0:\norm{\rho'-\rho}_1\leq \epsilon, \Tr{\rho'}\leq\Tr{\rho}}$. The
equivalence between $C^{A|B,\epsilon}_{\max}$ and $C^{A|B}_r$ in the asymptotic limit is given in the following proposition.

\begin{prop}\label{eq:r_vs_max}
Given a bipartite state $\rho_{AB}\in\cD(\cH_A\ot \cH_B)$, we have
\begin{eqnarray}
C^{A|B}_r(\rho_{AB})=\lim_{\epsilon\to 0}\lim_{n\to \infty}
\frac{1}{n}C^{A^n|B^n,\epsilon}_{\max}(\rho^{\ot n}_{AB}).
\end{eqnarray}
\end{prop}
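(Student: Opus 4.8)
The plan is to establish an asymptotic equivalence of the form
$\lim_{\epsilon\to 0}\lim_{n\to\infty}\frac{1}{n}C^{A^n|B^n,\epsilon}_{\max}(\rho^{\ot n}_{AB})=C^{A|B}_r(\rho_{AB})$ by recasting everything in terms of the smooth max-relative entropy and then invoking a quantum Stein-type/asymptotic equipartition argument relative to the set $\cI Q$. As already observed in the excerpt, we may write
$C^{A^n|B^n,\epsilon}_{\max}(\rho^{\ot n}_{AB})=\min_{\sigma\in \cI Q_n}D^{\epsilon}_{\max}(\rho^{\ot n}_{AB}||\sigma)$,
where $\cI Q_n$ denotes the incoherent-quantum states on $(\cH_A\ot\cH_B)^{\ot n}$. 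So the real content is an asymptotic relation between the smoothed max-relative entropy minimized over $\cI Q$ and the ordinary relative entropy minimized over $\cI Q$, which is exactly $C^{A|B}_r(\rho_{AB})$.

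\emph{Upper bound.} First I would fix an optimal $\cI Q$ state $\sigma^{\ast}$ achieving $C^{A|B}_r(\rho_{AB})=S(\rho_{AB}||\sigma^{\ast})$ and use the tensor-power state $(\sigma^{\ast})^{\ot n}$, which lies in $\cI Q_n$, as a (suboptimal) feasible point. Then $\frac{1}{n}C^{A^n|B^n,\epsilon}_{\max}(\rho^{\ot n}_{AB})\leq \frac{1}{n}D^{\epsilon}_{\max}(\rho^{\ot n}_{AB}||(\sigma^{\ast})^{\ot n})$, and the quantum asymptotic equipartition property (AEP) for the smooth max-relative entropy \cite{Datta2009,Brandao2011} gives $\lim_{\epsilon\to0}\lim_{n\to\infty}\frac{1}{n}D^{\epsilon}_{\max}(\rho^{\ot n}_{AB}||(\sigma^{\ast})^{\ot n})=S(\rho_{AB}||\sigma^{\ast})=C^{A|B}_r(\rho_{AB})$. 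This yields the $\leq$ direction directly.

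\emph{Lower bound.} For the reverse inequality I would argue that no sequence of $\cI Q_n$ states can do asymptotically better than the regularized relative entropy of IQ coherence. The key point is that $\cI Q$ is a closed, convex set closed under tensor products and permutations, so the standard generalized-Stein-lemma machinery of Brandão--Plenio \cite{Brandao2011} applies: for any sequence $\sigma_n\in \cI Q_n$ one has a lower bound on the relevant hypothesis-testing/smoothed quantity in terms of $\frac{1}{n}$ times the regularized relative entropy $\lim_{m\to\infty}\frac{1}{m}\min_{\tau\in \cI Q_m}S(\rho^{\ot m}_{AB}||\tau)$. Because $C^{A|B}_r$ is additive on tensor powers when minimized over $\cI Q$ — which follows from the product structure of $\cI Q$ together with superadditivity of the relative-entropy distance to a tensor-stable set — this regularization collapses to the single-copy value $C^{A|B}_r(\rho_{AB})$, giving the matching $\geq$ bound.

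\emph{Main obstacle.} The hard part will be the lower bound, and specifically verifying that the hypotheses of the generalized Stein lemma hold for the family $\{\cI Q_n\}$: one must check that $\cI Q$ is closed under tensoring, contains a full-rank element, is closed under permutation of the $n$ copies, and is convex and closed. The tensor-closure and permutation-invariance properties are where care is needed, since an element of $\cI Q_n$ need not factor across the $n$ copies, so I would either invoke the Brandão--Plenio framework directly (which only requires these structural axioms on the sets, not product states) or supply a short self-contained argument that $\min_{\sigma\in \cI Q_n}S(\rho^{\ot n}||\sigma)$ is superadditive and hence its regularization equals the single-copy quantity. Once that additivity/regularization statement is in hand, combining it with the AEP-based upper bound closes the proposition.
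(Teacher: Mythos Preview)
Your proposal is correct and follows essentially the same route as the paper: verify that the family $\{\cI_{A^n}Q_{B^n}\}_n$ satisfies the structural axioms (convex, closed, contains a full-rank state, closed under partial trace, tensor products, and permutations) and then invoke the Brand\~ao--Plenio generalized Quantum Stein Lemma to identify the smoothed asymptotic quantity with the regularized relative entropy of IQ coherence. The only substantive difference is in how additivity is dispatched: you propose an abstract superadditivity argument for $\min_{\sigma\in\cI Q_n}S(\rho^{\ot n}||\sigma)$, whereas the paper simply uses the closed-form expression $C^{A|B}_r(\rho_{AB})=S(\Delta_A(\rho_{AB}))-S(\rho_{AB})$ from \cite{Chitambar2016}, which makes $\min_{\sigma\in\cI_{A^n}Q_{B^n}}S(\rho^{\ot n}_{AB}||\sigma)=S(\Delta_A^{\ot n}(\rho^{\ot n}_{AB}))-S(\rho^{\ot n}_{AB})=nC^{A|B}_r(\rho_{AB})$ immediate and collapses the regularization in one line; this is both shorter and avoids the vagueness in your ``tensor-stable set'' step.
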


\begin{proof}
The set of incoherent-quantum states $\cI_{A^n}Q_{B^n}$ in $\cD((\cH_A\ot\cH_B)^{\ot n})$ has the
following form $\sigma_{A^n|B^n}=\sum_ip_i\sigma^{A_1}_{1}\ot\ldots\ot \sigma^{A_n}_{i}\ot \tau^{B_1\ldots B_n}_i$ with
$\sigma^{A_k}_i$ being incoherent and  $\tau^{B_1\ldots B_n}_i\in\cD(\cH^{\ot n}_B)$ for any $i,k$. It is easy to see that
the family of sets $\set{\cI_{A^n}Q_{B^n}}_n$ with $\cI_{A^n}Q_{B^n}\in \cD((\cH_A\ot \cH_B)^{\ot n})$ satisfy the five conditions required in Ref. \cite{Brandao2010CMP} as follows:
(1) Each $\cI_{A^n}Q_{B^n}$ is convex and closed;
(2) Each $\cI_{A^n}Q_{B^n}$ contains a state $\sigma^{\ot n}$ with $\sigma\in \cD(\cH_A\ot\cH_B)$ being full rank;
(3) If $\rho\in \cI_{A^{n+1}}Q_{B^{n+1}}$, then $\Ptr{k}{\rho}\in\cI_{A^n}Q_{B^n}$ for any $k\in \set{1, ..., n+1}$ where
$\mathrm{Tr}_k$ means the partial trace on the $k$th $\cH_A\ot\cH_B$ in $(\cH_A\ot\cH_B)^{\ot n}$;
(4) If $\rho\in \cI_{A^n}Q_{B^n}$ and $\tau\in \cI_{A^m}Q_{B^m}$, then $\rho\ot \tau\in \cI_{A^{m+n}}Q_{B^{m+n}}$;
(5) Each $\cI_{A^n}Q_{B^n}$ is permutation invariant, that is, for every state $\rho\in \cI_{A^n}Q_{B^n}$, 
$P_{\pi}\rho P_{\pi}\in \cI_{A^n}Q_{B^n}$ where $S_n$ is the symmetry group group of order $n$ and $P_{\pi}$ is the
representation of the element $\pi\in S_n$ in the space $(\cH_A\ot\cH_B)^{\ot n}$ which is given by 
$P_{\pi}(\psi_1\ot ... \ot \psi_n)=\psi_{\pi^{-1}(1)}\ot....\ot \psi_{\pi^{-1}(n)}$.
Thus according to the generalized Quantum Stein Lemma \cite{Brandao2010CMP},
we have
\begin{eqnarray*}
&&\lim_{\epsilon\to 0}\lim_{n\to \infty}
\frac{1}{n}C^{A^n|B^n,\epsilon}_{\max}(\rho^{\ot n}_{AB})\\
&=&\lim_{n\to \infty}\min_{\sigma_{A^n|B^n}\in \cI_{A^n}Q_{B^n}}\frac{S(\rho^{\ot n}_{AB}||\sigma_{A^n|B^n})}{n}.
\end{eqnarray*}
Since
\begin{eqnarray*}
&&\min_{\sigma_{A^n|B^n}\in \cI_{A^n}Q_{B^n}}S(\rho^{\ot n}_{AB}||\sigma_{A^n|B^n})\\
&=&S(\Delta^{\ot n}_A(\rho^{\ot n}_{AB}))-S(\rho^{\ot n}_{AB})\\
&=&n[S(\Delta_A(\rho_{AB}))-S(\rho_{AB})]\\
&=&nC^{A|B}_r(\rho_{AB}),
\end{eqnarray*}
we get the result.

\end{proof}

In view of the definition of smooth min-relative entropy of  IQ coherence measure, it can be expressed as follows,
\begin{eqnarray*}
C^{A|B,\epsilon}_{\min}(\rho_{AB})
=\min_{\sigma_{A|B}\in\cI Q}D^{\epsilon}_{\min}(\rho_{AB}||\sigma_{A|B}),
\end{eqnarray*}
where $D^{\epsilon}_{\min}(\rho||\sigma)$ is the smooth min-relative entropy \cite{Datta2009IEEE,Datta2009,Brandao2011}
defined as
\begin{eqnarray}\label{eq:s_min}
D^\epsilon_{\min}(\rho||\sigma)=
\sup_{\substack{
0\leq O\leq \mathbb{I}\\
\Tr{O\rho}\geq 1-\epsilon}}
-\log\Tr{O\sigma}.
\end{eqnarray} First,
since the set of incoherent-quantum states $\cI_{A^n}Q_{B^n}$ in $\cD((\cH_A\ot\cH_B)^{\ot n})$ satisfy the conditions
 in \cite{Brandao2010CMP}, we have the following lemma.

\begin{lem}\label{lem:stein}
Given a quantum state $\rho_{AB}\in\cD(\cH_A\ot \cH_B)$,

(Direct part) For any $\epsilon>0$, there exists a sequence of POVMs $\set{M_{A^nB^n}, \mathbb{I}-M_{A^nB^n}}_n$ such that
\begin{eqnarray}
\lim_{n\to\infty}
\Tr{(\mathbb{I}-M_{A^nB^n})\rho^{\ot n}_{AB}}=0,
\end{eqnarray}
and for every integer $n$ and every  incoherent-quantum state $\sigma_{A^n|B^n}\in\cI_{A^n}Q_{B^n}$,
\begin{eqnarray}
-\frac{\log\Tr{M_{A^nB^n}\sigma_{A^n|B^n}}}{n}+\epsilon
\geq C^{A|B}_r(\rho_{AB}).
\end{eqnarray}

(Strong converse) If there exists $\epsilon>0$ and a sequence of POVMs $\set{M_{A^nB^n}, \mathbb{I}-M_{A^nB^n}}_n$ such that for every integer $n>0$ and $\sigma_{A^n|B^n}\in\cI_{A^n}Q_{B^n}$,
\begin{eqnarray}
-\frac{\log\Tr{M_{A^nB^n}\sigma_{A^n|B^n}}}{n}-\epsilon
\geq C^{A|B}_r(\rho_{AB}),
\end{eqnarray}
then
\begin{eqnarray}
\lim_{n\to\infty}\Tr{(\mathbb{I}-M_{A^nB^n})\rho^{\ot n}_{AB}}=1.
\end{eqnarray}

\end{lem}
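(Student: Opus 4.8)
The plan is to obtain both halves of the lemma as immediate specializations of the generalized Quantum Stein Lemma of Brand\~ao and Plenio \cite{Brandao2010CMP}. That theorem concerns the asymptotic discrimination of the i.i.d.\ source $\rho^{\ot n}$ against a family of sets $\set{\cM_n}_n$, and it identifies the optimal type-II error exponent with the regularized relative entropy $\lim_{n\to\infty}\frac{1}{n}\min_{\sigma_n\in\cM_n}S(\rho^{\ot n}||\sigma_n)$. The single structural hypothesis it requires is that $\set{\cM_n}_n$ satisfies the five conditions (convexity and closedness, containment of an i.i.d.\ full-rank state $\sigma^{\ot n}$, closure under partial trace, closure under tensor products, and permutation invariance); for the choice $\cM_n=\cI_{A^n}Q_{B^n}$ these were already verified in the proof of Proposition~\ref{eq:r_vs_max}, so there is nothing further to check on that front.

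The only remaining input is to evaluate the relevant rate. Here I would simply quote the additivity computation performed in the proof of Proposition~\ref{eq:r_vs_max}, namely that $\min_{\sigma_{A^n|B^n}\in\cI_{A^n}Q_{B^n}}S(\rho^{\ot n}_{AB}||\sigma_{A^n|B^n})=n\,C^{A|B}_r(\rho_{AB})$. Because the relative entropy of IQ coherence is thus exactly additive across copies, the regularized relative-entropy distance of $\rho^{\ot n}_{AB}$ to the family $\set{\cI_{A^n}Q_{B^n}}_n$ equals $C^{A|B}_r(\rho_{AB})$ with no regularization correction. This is the number that plays the role of the error exponent throughout.

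With the rate pinned to $C^{A|B}_r(\rho_{AB})$, the two parts of the lemma read off directly from the two parts of the generalized Stein Lemma. For the direct part, its achievability statement supplies, for every $\epsilon>0$, a sequence of tests $\set{M_{A^nB^n},\mathbb{I}-M_{A^nB^n}}_n$ with vanishing type-I error $\Tr{(\mathbb{I}-M_{A^nB^n})\rho^{\ot n}_{AB}}\to0$ and with $-\frac{1}{n}\log\Tr{M_{A^nB^n}\sigma_{A^n|B^n}}\ge C^{A|B}_r(\rho_{AB})-\epsilon$ uniformly over all $\sigma_{A^n|B^n}\in\cI_{A^n}Q_{B^n}$; transposing $\epsilon$ gives the stated form with $+\epsilon$ on the left. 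For the strong converse, its optimality statement asserts that any test sequence whose exponent lies a fixed margin $\epsilon$ above the rate, i.e.\ $-\frac{1}{n}\log\Tr{M_{A^nB^n}\sigma_{A^n|B^n}}-\epsilon\ge C^{A|B}_r(\rho_{AB})$, must have success probability on $\rho^{\ot n}_{AB}$ tending to $1$, which is exactly the conclusion $\Tr{(\mathbb{I}-M_{A^nB^n})\rho^{\ot n}_{AB}}\to1$.

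Since the heavy lifting---both the verification of the five conditions and the additivity of the rate---is already discharged in the proof of Proposition~\ref{eq:r_vs_max}, I do not anticipate a genuine obstacle: the argument is essentially an invocation of \cite{Brandao2010CMP}. The only point demanding care is bookkeeping, namely aligning the sign and normalization conventions for the error exponents in the cited theorem with the precise $+\epsilon$ and $-\epsilon$ placements written in the statement, so that the achievability and converse inequalities are matched to the correct halves of the lemma.
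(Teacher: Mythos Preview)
Your proposal is correct and matches the paper's approach exactly: the paper treats this lemma as an immediate application of the generalized Quantum Stein Lemma \cite{Brandao2010CMP}, noting just before the statement that the five structural conditions on $\set{\cI_{A^n}Q_{B^n}}_n$ were verified in the proof of Proposition~\ref{eq:r_vs_max}, and the additivity of the rate is likewise imported from that proof. (One minor wording slip: in the strong-converse paragraph you say ``success probability on $\rho^{\ot n}_{AB}$ tending to $1$,'' but what your displayed formula---and the lemma---actually asserts is that the type-I \emph{error} $\Tr{(\mathbb{I}-M_{A^nB^n})\rho^{\ot n}_{AB}}$ tends to $1$, i.e.\ the acceptance probability $\Tr{M_{A^nB^n}\rho^{\ot n}_{AB}}$ tends to $0$.)
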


\begin{prop}\label{eq:r_vs_max}
Given a bipartite state $\rho_{AB}\in\cD(\cH_A\ot \cH_B)$, we have
\begin{eqnarray}
C^{A|B}_r(\rho_{AB})=\lim_{\epsilon\to 0}\lim_{n\to \infty}
\frac{1}{n}C^{A^n|B^n,\epsilon}_{\min}(\rho^{\ot n}_{AB}).
\end{eqnarray}
\end{prop}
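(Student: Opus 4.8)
The plan is to sandwich the regularized quantity between $R:=C^{A|B}_r(\rho_{AB})$ from both sides using the two halves of Lemma~\ref{lem:stein}, in analogy with the corresponding single-party result of Ref.~\cite{Bu2017b}. First I would put the smooth min-relative entropy of IQ coherence into its variational form,
\begin{eqnarray*}
\frac{1}{n}C^{A^n|B^n,\epsilon}_{\min}(\rho^{\ot n}_{AB})
=\max_{\substack{0\leq O\leq\mathbb{I}\\ \Tr{O\rho^{\ot n}_{AB}}\geq 1-\epsilon}}
\min_{\sigma_{A^n|B^n}\in\cI_{A^n}Q_{B^n}}
\Br{-\frac{1}{n}\log\Tr{O\sigma_{A^n|B^n}}},
\end{eqnarray*}
so that it becomes exactly the composite hypothesis-testing quantity governed by the generalized Quantum Stein Lemma for the family $\set{\cI_{A^n}Q_{B^n}}_n$ \cite{Brandao2010CMP}. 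The identity $\min_{\sigma_{A^n|B^n}}\frac{1}{n}S(\rho^{\ot n}_{AB}||\sigma_{A^n|B^n})=C^{A|B}_r(\rho_{AB})$ established in the preceding proposition pins the target value to $R$.

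For the lower bound $\liminf_{n}\frac{1}{n}C^{A^n|B^n,\epsilon}_{\min}\geq R$, I would invoke the direct part of Lemma~\ref{lem:stein}: for any $\eta>0$ it supplies a sequence of POVM elements $\set{M_{A^nB^n}}_n$ with $\Tr{(\mathbb{I}-M_{A^nB^n})\rho^{\ot n}_{AB}}\to 0$ and $-\frac{1}{n}\log\Tr{M_{A^nB^n}\sigma_{A^n|B^n}}\geq R-\eta$ for every $\sigma_{A^n|B^n}\in\cI_{A^n}Q_{B^n}$. Since $\Tr{M_{A^nB^n}\rho^{\ot n}_{AB}}\to 1$, for any fixed $\epsilon\in(0,1)$ the operator $M_{A^nB^n}$ is feasible in the maximization for all large $n$; substituting it yields $\frac{1}{n}C^{A^n|B^n,\epsilon}_{\min}(\rho^{\ot n}_{AB})\geq R-\eta$ eventually, and letting $\eta\to 0$ gives the claim.

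For the upper bound $\limsup_n \frac{1}{n}C^{A^n|B^n,\epsilon}_{\min}\leq R$, I would argue by contradiction through the strong converse part. If the $\limsup$ exceeded $R$, there would be some $\eta>0$ and feasible operators $O_n$ (with $\Tr{O_n\rho^{\ot n}_{AB}}\geq 1-\epsilon$) achieving $-\frac{1}{n}\log\Tr{O_n\sigma_{A^n|B^n}}\geq R+\eta$ for all $\sigma_{A^n|B^n}$ along a subsequence; these $O_n$ satisfy the hypothesis of the strong converse, which forces $\Tr{(\mathbb{I}-O_n)\rho^{\ot n}_{AB}}\to 1$, i.e. $\Tr{O_n\rho^{\ot n}_{AB}}\to 0$, contradicting $\Tr{O_n\rho^{\ot n}_{AB}}\geq 1-\epsilon$ for $\epsilon<1$. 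Combining the two bounds, $\lim_n\frac{1}{n}C^{A^n|B^n,\epsilon}_{\min}(\rho^{\ot n}_{AB})=R$ for every fixed $\epsilon\in(0,1)$, whence the outer limit $\epsilon\to 0$ is also $R$.

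The main obstacle is bridging the composite strong converse of Lemma~\ref{lem:stein}, which is phrased for a sequence of POVMs valid at every $n$, with the subsequence of near-optimal operators produced by the contradiction hypothesis; I would handle this by extending the subsequence to a full sequence, padding the remaining indices with feasible operators that trivially satisfy the converse hypothesis, or by passing to a form of the converse robust to restriction along subsequences. A secondary point requiring care is the $\max$--$\min$ exchange relating the two displayed forms of $C^{A|B,\epsilon}_{\min}$, which follows from Sion's minimax theorem since the objective is linear in $O$ and convex in $\sigma_{A^n|B^n}$ while both feasible sets are convex and compact.
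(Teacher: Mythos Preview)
Your lower bound is identical to the paper's: both invoke the direct part of Lemma~\ref{lem:stein} to produce a feasible test $M_{A^nB^n}$ and conclude $\frac{1}{n}C^{A^n|B^n,\epsilon}_{\min}\geq R-\eta$ for all large $n$.

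Your upper bound, however, takes a genuinely different route. The paper does not use the strong converse at all; it simply appeals to the pointwise inequality $D^{\epsilon}_{\min}(\rho||\sigma)\leq D^{\epsilon}_{\max}(\rho||\sigma)-\log(1-2\epsilon)$ from Ref.~\cite{Bu2017b}, which after minimizing over $\sigma\in\cI Q$ yields $C^{A^n|B^n,\epsilon}_{\min}\leq C^{A^n|B^n,\epsilon}_{\max}-\log(1-2\epsilon)$, and then invokes the preceding proposition for $C^{\epsilon}_{\max}$. This is shorter and sidesteps entirely the subsequence issue you flag. Your strong-converse argument is also valid---the padding by $M_n=2^{-n(R+\eta)}\mathbb{I}$ at non-subsequence indices that you sketch does produce a full sequence meeting the hypothesis of the converse, and the contradiction with $\Tr{O_{n_k}\rho^{\ot n_k}}\geq 1-\epsilon$ follows---but it costs that extra step. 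In short, your approach is self-contained (both halves of Lemma~\ref{lem:stein} do all the work, with no reliance on the $\max$ result), whereas the paper's approach is more economical because it recycles the proposition just proved.

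One minor wording slip: in your Sion remark, $-\log\Tr{O\sigma}$ is not linear in $O$; it is a monotone transform of a linear functional, hence quasilinear, which is still enough for Sion's theorem. In any case the $\max$--$\min$ interchange is not actually used anywhere in your main argument, so this point is inessential.
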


\begin{proof}
Due to the fact that $C^{A|B}_r(\rho_{AB})=\lim_{\epsilon\to 0}\lim_{n\to \infty}
\frac{1}{n}C^{A^n|B^n,\epsilon}_{\max}(\rho^{\ot n}_{AB})$ and
$D^{\epsilon}_{\min}\leq D^{\epsilon}_{\max}-\log(1-2\epsilon)$ \cite{Bu2017b},
we have
\begin{eqnarray}
\lim_{\epsilon\to 0}\lim_{n\to \infty}
\frac{1}{n}C^{A^n|B^n,\epsilon}_{\min}(\rho^{\ot n}_{AB})\leq C^{A|B}_r(\rho_{AB}).
\end{eqnarray}

Next, we  prove the converse direction,
\begin{eqnarray}
\lim_{\epsilon\to 0}\lim_{n\to \infty}
\frac{1}{n}C^{A^n|B^n,\epsilon}_{\min}(\rho^{\ot n}_{AB})\geq C^{A|B}_r(\rho_{AB}).
\end{eqnarray}
According to Lemma \ref{lem:stein},
for any $\epsilon>0$, there exists a sequence of POVMs $\set{M_{A^nB^n}}$ such that for sufficient large integer $n$,  $\Tr{\rho^{\ot n}_{AB}M_{A^nB^n}}\geq1-\epsilon$. Hence
\begin{eqnarray*}
C^{A^n|B^n,\epsilon}_{\min}(\rho^{\ot n}_{AB})
&\geq& \min_{\sigma_{A^n|B^n}\in\cI^nQ^n}-\log\Tr{M_{A^nB^n}\sigma_{A^n|B^n}}\\
&\geq& n(C^{A|B}_r(\rho)-\epsilon),
\end{eqnarray*}
where the last inequality comes from the direct part of Lemma \ref{lem:stein}.
Therefore,
\begin{eqnarray*}
\lim_{\epsilon\to 0}
\lim_{n\to\infty}
\frac{1}{n}C^{A^n|B^n,\epsilon}_{\min}(\rho^{\ot n}_{AB})
\geq C^{A|B}_r(\rho_{AB}).
\end{eqnarray*}
\end{proof}

\section{Details about the proof in the distribution of coherence }\label{appen:pro_dis}
To prove the results in the distribution of coherence, the following lemmas is necessary.

\begin{lem}\label{lem:gen_op}
( \text{Gentle operator lemma \cite{Winter1999,Ogawa2007}})
Suppose $\rho$ is a subnormalized state with $\Tr{\rho}\leq1$, $\rho\geq 0$ and $M$
is an operator with $0\leq M\leq \mathbb{I}$,
$\Tr{\rho M}\geq \Tr{\rho}-\epsilon$, then
\begin{eqnarray}
\norm{\rho-\sqrt{M}\rho\sqrt{M}}_1\leq 2\sqrt{\epsilon},
\end{eqnarray}
where $\norm{A}_1=\Tr{\sqrt{A^\dag A}}$.
\end{lem}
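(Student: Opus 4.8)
The plan is to bound the trace norm directly for the (possibly mixed, subnormalized) operator $\rho$, without first passing through the pure-state case. The opening move is to split the difference with the triangle inequality, inserting the intermediate operator $\rho\sqrt{M}$:
\[
\norm{\rho-\sqrt{M}\rho\sqrt{M}}_1\leq \norm{\rho(\mathbb{I}-\sqrt{M})}_1+\norm{(\mathbb{I}-\sqrt{M})\rho\sqrt{M}}_1 ,
\]
since $\rho-\sqrt{M}\rho\sqrt{M}=\rho(\mathbb{I}-\sqrt{M})+(\mathbb{I}-\sqrt{M})\rho\sqrt{M}$. Each summand I would then attack with the Cauchy--Schwarz inequality for Schatten norms, $\norm{XY}_1\leq\norm{X}_2\norm{Y}_2$, where $\norm{X}_2=\sqrt{\Tr{X^\dag X}}$ is the Hilbert--Schmidt norm.

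For each term the idea is to factor $\rho=\sqrt{\rho}\sqrt{\rho}$ and distribute one square root to each side before applying Cauchy--Schwarz. The first term gives $\norm{\rho(\mathbb{I}-\sqrt{M})}_1\leq\norm{\sqrt{\rho}}_2\,\norm{\sqrt{\rho}(\mathbb{I}-\sqrt{M})}_2$, and the second gives $\norm{(\mathbb{I}-\sqrt{M})\rho\sqrt{M}}_1\leq\norm{(\mathbb{I}-\sqrt{M})\sqrt{\rho}}_2\,\norm{\sqrt{\rho}\sqrt{M}}_2$. The ``harmless'' factors are controlled by subnormalization: $\norm{\sqrt{\rho}}_2^2=\Tr{\rho}\leq 1$ and $\norm{\sqrt{\rho}\sqrt{M}}_2^2=\Tr{\rho M}\leq\Tr{\rho}\leq 1$. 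The remaining factor is, in both cases, governed by the single quantity $\Tr{\rho(\mathbb{I}-\sqrt{M})^2}$ (using cyclicity of the trace).

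The crux is to bound $\Tr{\rho(\mathbb{I}-\sqrt{M})^2}$ by $\epsilon$. For this I would invoke the operator inequality $(\mathbb{I}-\sqrt{M})^2\leq\mathbb{I}-M$, which holds because $0\leq M\leq\mathbb{I}$ forces $\sqrt{M}\geq M$, so that $\mathbb{I}-2\sqrt{M}+M\leq\mathbb{I}-2M+M=\mathbb{I}-M$. Together with the hypothesis $\Tr{\rho M}\geq\Tr{\rho}-\epsilon$ this yields $\Tr{\rho(\mathbb{I}-\sqrt{M})^2}\leq\Tr{\rho(\mathbb{I}-M)}=\Tr{\rho}-\Tr{\rho M}\leq\epsilon$. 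Substituting back, each of the two terms is at most $\sqrt{\epsilon}$, and the triangle inequality delivers the claimed bound $2\sqrt{\epsilon}$.

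The main obstacle will be justifying the operator inequality $(\mathbb{I}-\sqrt{M})^2\leq\mathbb{I}-M$ cleanly, as this is the only step where the specific structure of $M$ (beyond $0\leq M\leq\mathbb{I}$) is used. It is not a deep difficulty, since $\mathbb{I}-\sqrt{M}$ and $M$ are simultaneously diagonalizable and the statement reduces to the scalar fact $m\leq\sqrt{m}$ on $[0,1]$; it simply must be stated carefully. An alternative route would first establish the pure-state case $\rho=\proj{\psi}$ via the rank-one identity $\norm{\,\out{\psi}{\phi}\,}_1=\norm{\ket{\psi}}\,\norm{\ket{\phi}}$ and then lift to mixed states by convexity, but the direct Schatten-norm argument above avoids that extra lifting step entirely.
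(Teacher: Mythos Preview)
Your proof is correct. Every step checks out: the decomposition $\rho-\sqrt{M}\rho\sqrt{M}=\rho(\mathbb{I}-\sqrt{M})+(\mathbb{I}-\sqrt{M})\rho\sqrt{M}$, the H\"older/Cauchy--Schwarz bound $\norm{XY}_1\leq\norm{X}_2\norm{Y}_2$, and the operator inequality $(\mathbb{I}-\sqrt{M})^2\leq\mathbb{I}-M$ (reducible, as you note, to $m\leq\sqrt{m}$ on $[0,1]$ via simultaneous diagonalization) are all valid and combine to give exactly $2\sqrt{\epsilon}$.

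However, there is nothing to compare against: the paper does \emph{not} prove this lemma. It is stated with a citation to Winter (1999) and Ogawa--Nagaoka (2007) and used as a black box in the subsequent lemmas. Your argument is essentially the standard one found in those references (the Ogawa--Nagaoka refinement in particular uses the same splitting and the same operator inequality), so you have supplied a self-contained proof where the paper merely cites the literature.
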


\begin{lem}\label{lem:1}
Given a bipartite state $\rho_{AB}\in\cD(\cH_A\ot\cH_B)$ and a parameter $\epsilon\geq 0$,
\begin{eqnarray}
C^{\epsilon}_{\max}(AB)\geq C^{\epsilon'}_{\max}(A|B)+C^{\epsilon}_{\min}(B),
\end{eqnarray}
where $\epsilon'=\epsilon+2\sqrt{\epsilon}$, $C^{\epsilon}_{\max}$ and $C^{\epsilon}_{\min}$ are smooth max- and min-relative entropy of coherence defined in \cite{Bu2017b},
\begin{eqnarray*}
C^{\epsilon}_{\max}(\rho):&=&\min_{\rho'\in B_{\epsilon}(\rho)}C_{\max}(\rho'),\\
C^{\epsilon}_{\min}(\rho):&=&\max_{\substack{
0\leq O\leq \mathbb{I}\\
\Tr{O\rho}\geq 1-\epsilon}}
\min_{\sigma\in\cI}-\log\Tr{O\sigma},
\end{eqnarray*}
with $B_\epsilon(\rho)=\set{\rho'\geq0:\norm{\rho'-\rho}_1\leq \epsilon, \Tr{\rho'}\leq\Tr{\rho}}$.
\end{lem}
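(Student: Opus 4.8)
The plan is to convert the $D_{\max}$-domination that defines $C^\epsilon_{\max}(AB)$ into an $\cI Q$-domination by filtering the $B$ system with the optimal min-coherence test, and then to read off the two right-hand terms from (i) the unchanged $D_{\max}$ exponent and (ii) the trace lost under the filter. First I would fix the two optimizers. Let $\rho'_{AB}\in B_\epsilon(\rho_{AB})$ attain $C^\epsilon_{\max}(AB)$, so there is an incoherent $\sigma_{AB}\in\cI$ with $\rho'_{AB}\le 2^{C^\epsilon_{\max}(AB)}\sigma_{AB}$; since $\sigma_{AB}$ is diagonal in the product basis it has the form $\sigma_{AB}=\sum_i\proj{i}_A\ot d^B_i$ with each $d^B_i$ diagonal on $B$. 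Let $O_B$ with $0\le O_B\le\mathbb{I}_B$ and $\Tr{O_B\rho_B}\ge 1-\epsilon$ be the optimal test for $C^\epsilon_{\min}(B)$, i.e. $\min_{\tau\in\cI}-\log\Tr{O_B\tau}=C^\epsilon_{\min}(B)$.

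Next I would apply the $B$-side filter $\mathbb{I}_A\ot\sqrt{O_B}$ on both sides of the operator inequality. Setting $\tilde\rho_{AB}=(\mathbb{I}_A\ot\sqrt{O_B})\rho'_{AB}(\mathbb{I}_A\ot\sqrt{O_B})$ and $\tilde\sigma_{AB}=(\mathbb{I}_A\ot\sqrt{O_B})\sigma_{AB}(\mathbb{I}_A\ot\sqrt{O_B})$, positivity of conjugation gives $\tilde\rho_{AB}\le 2^{C^\epsilon_{\max}(AB)}\tilde\sigma_{AB}$. The crucial observation is that $\tilde\sigma_{AB}=\sum_i\proj{i}_A\ot\sqrt{O_B}\,d^B_i\,\sqrt{O_B}$ is, up to normalization, again an $\cI Q$ operator (the $A$-part stays incoherent, the $B$-part is an arbitrary positive operator). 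Writing $c=\Tr{\tilde\sigma_{AB}}=\Tr{O_B\sigma_B}$ with $\sigma_B=\Ptr{A}{\sigma_{AB}}\in\cI$ incoherent, normalizing $\tilde\sigma_{AB}$ shifts the max-relative entropy by $\log c$, so $C^{A|B}_{\max}(\tilde\rho_{AB})\le D_{\max}(\tilde\rho_{AB}||\tilde\sigma_{AB}/c)=D_{\max}(\tilde\rho_{AB}||\tilde\sigma_{AB})+\log c$.

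Since $\sigma_B\in\cI$ and $O_B$ is the optimal min test, $-\log\Tr{O_B\sigma_B}\ge C^\epsilon_{\min}(B)$, so the exponent bound $D_{\max}(\tilde\rho_{AB}||\tilde\sigma_{AB})\le C^\epsilon_{\max}(AB)$ yields
\begin{equation*}
C^{A|B}_{\max}(\tilde\rho_{AB})\le C^\epsilon_{\max}(AB)+\log\Tr{O_B\sigma_B}\le C^\epsilon_{\max}(AB)-C^\epsilon_{\min}(B).
\end{equation*}
It then remains only to certify that $\tilde\rho_{AB}$ is an admissible smoothing of $\rho_{AB}$: because $0\le\mathbb{I}_A\ot O_B\le\mathbb{I}_{AB}$ and $\Tr{(\mathbb{I}_A\ot O_B)\rho_{AB}}=\Tr{O_B\rho_B}\ge 1-\epsilon$, the gentle operator lemma (Lemma \ref{lem:gen_op}) bounds the trace-norm cost of the filter by $2\sqrt{\epsilon}$, and combined with $\norm{\rho'_{AB}-\rho_{AB}}_1\le\epsilon$ this places $\tilde\rho_{AB}\in B_{\epsilon'}(\rho_{AB})$ with $\epsilon'=\epsilon+2\sqrt{\epsilon}$. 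Hence $C^{\epsilon'}_{\max}(A|B)\le C^{A|B}_{\max}(\tilde\rho_{AB})\le C^\epsilon_{\max}(AB)-C^\epsilon_{\min}(B)$, which is the claim after rearranging.

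The main obstacle is exactly the final bookkeeping of the smoothing parameter. The filter is most natural to apply to $\rho'_{AB}$ for the domination inequality, yet the gentle operator estimate is cleanest against the normalized $\rho_{AB}$; reconciling these so that the filter costs only $2\sqrt{\epsilon}$ (rather than $2\sqrt{2\epsilon}$) while keeping the $D_{\max}$ exponent pinned at $C^\epsilon_{\max}(AB)$ is the delicate point, and is what forces the precise value $\epsilon'=\epsilon+2\sqrt{\epsilon}$. Everything else -- preservation of the $\cI Q$ structure under the $B$-side filter and the $\log c$ normalization identity -- is routine once the filtered state $\tilde\sigma_{AB}$ is written out explicitly.
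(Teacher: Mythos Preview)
Your argument is correct and mirrors the paper's proof essentially step for step: take the optimizers for $C^\epsilon_{\max}(AB)$ and $C^\epsilon_{\min}(B)$, conjugate the $D_{\max}$ inequality by $\mathbb{I}_A\ot\sqrt{O_B}$ so that the dominating operator lands in $\cI Q$ with trace bounded by $2^{-C^\epsilon_{\min}(B)}$, and control the smoothing via the triangle inequality plus the gentle operator lemma applied to $\rho_{AB}$. The only cosmetic differences are your basis form $\sigma_{AB}=\sum_i\proj{i}_A\ot d^B_i$ versus the paper's convex-mixture form, and your phrasing of the normalization as a $\log c$ shift of $D_{\max}$ versus the paper's $\log\lambda+\log\mu$; the content is identical.
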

\begin{proof}
Due to the definition of
$C^{\epsilon}_{\max}$, there exists an optimal state  $\bar{\rho}_{AB}\in B_{\epsilon}(\rho_{AB})$ such that
$C^{\epsilon}_{\max}(AB)=C_{\max}(\bar{\rho}_{AB})$. Let us take
$\lambda=2^{C^{\epsilon}_{\max}(AB)}$, then there exists an incoherent state $\sigma_{AB}=\sum_{i}p_i\sigma_{A,i}\ot \sigma_{B,i}\in\cI_{AB}$ with
$\sigma_{A,i}, \sigma_{B,i}$ being incoherent such that
\begin{eqnarray}\label{ineq:bi_max}
\lambda\sum_ip_i\sigma_{A,i}\ot \sigma_{B,i}\geq \bar{\rho}_{AB}.
\end{eqnarray}
According to the definition of $C^{\epsilon}_{\min}(\rho_B)$ with $\rho_B=\Ptr{A}{\rho_{AB}}$, there exists an positive operator $T_{B}$ with $0\leq T_B\leq \mathbb{I}_B$, $\Tr{T_B\rho_B}\geq 1-\epsilon$ such that $C^{\epsilon}_{\min}(B)=\min_{\sigma_B\in\cI_B}-\log\Tr{T_B\sigma_B}$. Thus, for any incoherent state $\sigma_B\in\cI_B$, $2^{-C^{\epsilon}_{\min}(\rho_B)}\geq \Tr{T_B\sigma_B}$.

 Applying $\sqrt{T_B}(\cdot) \sqrt{T_B}$ on the both sides of \eqref{ineq:bi_max}, we get
\begin{eqnarray}
\lambda\sum_ip_i\sigma_{A,i}\ot \sqrt{T_B}\sigma_{B,i}\sqrt{T_B}\geq \sqrt{T_B}\bar{\rho}_{AB}\sqrt{T_B}.
\end{eqnarray}
Then $\widetilde{\sigma}_{AB}=\sum_ip_i\sigma_{A,i}\ot \sqrt{T_B}\sigma_{B,i}\sqrt{T_B}=:\mu\sigma_{AB}$ with $\sigma_{AB}\in \cI_AQ_B$ and
$\mu=\Tr{\widetilde{\sigma}_{AB}}=\sum_ip_i\Tr{T_B\sigma_{B,i}}\leq 2^{-C^{\epsilon}_{\min}(B)}$.
Besides, $\rho'_{AB}=\sqrt{T_B}\bar{\rho}_{AB}\sqrt{T_B}\in B_{\epsilon'}(\rho_{AB})$ with $\epsilon'=\epsilon+2\sqrt{\epsilon}$ as
\begin{eqnarray*}
&&\norm{\sqrt{T_B}\bar{\rho}_{AB}\sqrt{T_B}-\rho_{AB}}_1\\
&\leq& \norm{\sqrt{T_B}\bar{\rho}_{AB}\sqrt{T_B}-\sqrt{T_B}\rho_{AB}\sqrt{T_B}}_1\\
&&+\norm{\sqrt{T_B}\rho_{AB}\sqrt{T_B}-\rho_{AB}}_1\\
&\leq& \norm{\bar{\rho}_{AB}-\rho_{AB}}_1+\norm{\sqrt{T_B}\rho_{AB}\sqrt{T_B}-\rho_{AB}}_1\\
&\leq& \epsilon+2\sqrt{\epsilon},
\end{eqnarray*}
where the last inequality comes from the fact that $\bar{\rho}_{AB}\in B_{\epsilon}(\rho_{AB})$ and the gentle operator lemma \ref{lem:gen_op}.
Hence, $\lambda\mu \sigma_{A|B}\geq \rho'_{AB}$ with $\sigma_{A|B}\in \cI_A Q_B$.
Thus,
\begin{eqnarray*}
C^{\epsilon'}_{\max}(A|B)
&\leq& C^{A|B}_{\max}(\rho'_{AB})\\
&\leq& \log\lambda+\log\mu\\
&\leq& C^{\epsilon}_{\max}(AB)-C^{\epsilon}_{\min}(B).\\
\end{eqnarray*}

\end{proof}

\begin{mproof}[Proof of Eq. \eqref{eq:bi_re}]
This result comes directly from the Lemma \ref{lem:1} and the following facts,
\begin{eqnarray}
\label{eq:smooth1}\lim_{\epsilon\to 0}\lim_{n\to \infty}
\frac{1}{n}C^{\epsilon}_{\min \setminus \max}(\rho^{\ot n})&=&C_r(\rho),\\
\label{eq:smooth2}\lim_{\epsilon\to 0}\lim_{n\to \infty}
\frac{1}{n}C^{A^n|B^n,\epsilon}_{\min \setminus \max}(\rho^{\ot n}_{AB})&=&C^{A|B}_r(\rho_{AB}),
\end{eqnarray}
where \eqref{eq:smooth1} has been proved in Ref. \cite{Bu2017b}.
\end{mproof}

\begin{lem}\label{lem:coh_ent}
Given a bipartite $\rho_{AB}\in\cD(\cH_A\ot\cH_B)$ and a parameter $\epsilon\geq 0$, we have
\begin{eqnarray}
C^{\epsilon}_{\max}(A|B)
\geq E^{\epsilon'}_{\max}(A:B)+C^{\epsilon}_{\min}(A),
\end{eqnarray}
where  $\epsilon'=\epsilon+2\sqrt{\epsilon}$ and $E^{\epsilon'}_{\max}$ is the smooth max-relative entropy of entanglement defined in \cite{Datta2009,Brandao2011} as follows,
\begin{eqnarray*}
E^{\epsilon}_{\max}(A:B):=\min_{\rho'_{AB}\in B_{\epsilon}(\rho_{AB})}E_{\max}(\rho'_{AB}),
\end{eqnarray*}
with $B_\epsilon(\rho)=\set{\rho'\geq0:\norm{\rho'-\rho}_1\leq \epsilon, \Tr{\rho'}\leq\Tr{\rho}}$.
Moreover, for any tripartite state $\rho_{ABC}\in\cD(\cH_A\ot\cH_B\ot\cH_C)$, we have
\begin{eqnarray}\nonumber
C^{\epsilon}_{\max}(AB|C)
\geq E^{\epsilon''}_{\max}(A:B:C)+C^{\epsilon}_{\min}(A)+C^{\epsilon}_{\min}(B),
\end{eqnarray}
where  $\epsilon''=\epsilon+2\sqrt{2\epsilon}$.
\end{lem}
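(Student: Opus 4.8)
The plan is to mirror the argument of Lemma~\ref{lem:1}, but now to apply the gentle-measurement trick on the \emph{incoherent} part (here $A$, or $A$ and $B$) rather than on the quantum memory. The effect is to strip off the local coherence $C^\epsilon_{\min}(A)$ (and, in the tripartite case, also $C^\epsilon_{\min}(B)$) while simultaneously relaxing the incoherent--quantum constraint on the optimal witness into a merely \emph{separable} one, so that the residual bound is governed by the smooth max-relative entropy of entanglement $E_{\max}$ instead of $C^{A|B}_{\max}$.

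For the bipartite inequality I would first choose the optimal smoothing $\bar\rho_{AB}\in B_\epsilon(\rho_{AB})$ with $C^\epsilon_{\max}(A|B)=C^{A|B}_{\max}(\bar\rho_{AB})$ and set $\lambda=2^{C^\epsilon_{\max}(A|B)}$. By definition of $C^{A|B}_{\max}$ there is an $\cI Q$ state $\sigma_{A|B}=\sum_i p_i\,\sigma^A_i\ot\tau^B_i$ (with $\sigma^A_i$ incoherent and $\tau^B_i$ arbitrary) satisfying $\lambda\,\sigma_{A|B}\geq\bar\rho_{AB}$. Next I invoke the optimal test $T_A$ for $C^\epsilon_{\min}(A)$, i.e.\ $0\leq T_A\leq\mathbb{I}_A$, $\Tr{T_A\rho_A}\geq 1-\epsilon$, and $\Tr{T_A\sigma_A}\leq 2^{-C^\epsilon_{\min}(A)}$ for every incoherent $\sigma_A$. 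Conjugating the operator inequality by $\sqrt{T_A}\ot\mathbb{I}_B$ (which preserves positivity) gives $\lambda\sum_i p_i\,\sqrt{T_A}\sigma^A_i\sqrt{T_A}\ot\tau^B_i\geq \rho'_{AB}$, where $\rho'_{AB}:=(\sqrt{T_A}\ot\mathbb{I}_B)\bar\rho_{AB}(\sqrt{T_A}\ot\mathbb{I}_B)$. The left-hand side is a subnormalized \emph{separable} operator $\mu\,\sigma'_{A:B}$ of trace $\mu=\sum_i p_i\Tr{T_A\sigma^A_i}\leq 2^{-C^\epsilon_{\min}(A)}$; the crucial point is that $\sqrt{T_A}\sigma^A_i\sqrt{T_A}$ need not be diagonal, which is exactly why the witness lands in the separable set and not in $\cI Q$. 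The triangle-inequality-plus-gentle-operator estimate of Lemma~\ref{lem:1} shows $\rho'_{AB}\in B_{\epsilon'}(\rho_{AB})$ with $\epsilon'=\epsilon+2\sqrt{\epsilon}$, so $E^{\epsilon'}_{\max}(A:B)\leq E_{\max}(\rho'_{AB})\leq\log\lambda+\log\mu\leq C^\epsilon_{\max}(A|B)-C^\epsilon_{\min}(A)$, as claimed.

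For the tripartite inequality I would run the identical scheme, conjugating by $\sqrt{T_A}\ot\sqrt{T_B}\ot\mathbb{I}_C$, where $T_A,T_B$ are the optimal tests for $C^\epsilon_{\min}(A)$ and $C^\epsilon_{\min}(B)$. Starting from $\lambda\,\sigma_{AB|C}\geq\bar\rho_{ABC}$ with $\sigma_{AB|C}=\sum_i p_i\,\sigma^A_i\ot\sigma^B_i\ot\tau^C_i$, the conjugated left-hand side becomes a subnormalized fully $A{:}B{:}C$-separable operator of trace $\mu=\sum_i p_i\Tr{T_A\sigma^A_i}\Tr{T_B\sigma^B_i}\leq 2^{-C^\epsilon_{\min}(A)-C^\epsilon_{\min}(B)}$, and the ball trace condition $\Tr{\rho'_{ABC}}\leq\Tr{\rho_{ABC}}$ is automatic from $T_A\ot T_B\leq\mathbb{I}$. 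The only genuinely new point is the radius of the resulting ball: I would show $\Tr{(T_A\ot T_B)\rho_{ABC}}\geq 1-2\epsilon$ by writing $\Tr{(T_A\ot T_B)\rho}=\Tr{T_A\rho_A}-\Tr{(T_A\ot(\mathbb{I}_B-T_B))\rho}$ and bounding the second term by $\Tr{(\mathbb{I}_B-T_B)\rho_B}\leq\epsilon$. Feeding $M=T_A\ot T_B$ into Lemma~\ref{lem:gen_op} then yields $\norm{\rho-(\sqrt{T_A}\ot\sqrt{T_B})\rho(\sqrt{T_A}\ot\sqrt{T_B})}_1\leq 2\sqrt{2\epsilon}$, and together with $\norm{\bar\rho-\rho}_1\leq\epsilon$ this places $\rho'_{ABC}$ in $B_{\epsilon''}(\rho_{ABC})$ with $\epsilon''=\epsilon+2\sqrt{2\epsilon}$. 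The bound $E^{\epsilon''}_{\max}(A:B:C)\leq\log\lambda+\log\mu$ then finishes the proof.

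I expect the main obstacle to be bookkeeping rather than conceptual: verifying that the two-sided gentle-measurement perturbation genuinely stays inside the correct smoothing ball, in particular the $1-2\epsilon$ estimate for the product test $T_A\ot T_B$ that produces the $2\sqrt{2\epsilon}$ (hence $\epsilon''$) rather than a naive $\epsilon+2\sqrt{\epsilon}$. Everything else is a direct transcription of Lemma~\ref{lem:1}, the single conceptual step being the observation that conjugating by the test on the incoherent factor converts an $\cI Q$ witness into a separable one.
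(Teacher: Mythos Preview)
Your proposal is correct and follows essentially the same approach as the paper: in the bipartite case your argument is step-for-step identical to the paper's proof (optimal smoothing, conjugation by $\sqrt{T_A}$, the observation that $\sqrt{T_A}\sigma^A_i\sqrt{T_A}\ot\tau^B_i$ is merely separable, and the $\epsilon+2\sqrt{\epsilon}$ ball estimate via Lemma~\ref{lem:gen_op}). For the tripartite case the paper simply asserts that the proof is similar; your derivation of $\Tr{(T_A\ot T_B)\rho}\geq 1-2\epsilon$ and the resulting $\epsilon''=\epsilon+2\sqrt{2\epsilon}$ is exactly the detail that justifies that assertion.
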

\begin{proof}
The proof of the tripartite is similar to that of  bipartite case. Hence, we only prove the bipartite case.
Due to the definition of
$C^{\epsilon}_{\max}(A|B)$, there exists an optimal state $\bar{\rho}_{AB}\in B_{\epsilon}(\rho_{AB})$ such that
$C^{\epsilon}_{\max}(A|B)=C^{A|B}_{\max}(\bar{\rho}_{AB})$. Let us take
$\lambda=2^{C^{\epsilon}_{\max}(A|B)}$, then there exists an incoherent-quantum state $\sigma_{A|B}=\sum_{i}p_i\sigma_{A,i}\ot \tau_{B,i}\in\cI_{A}Q_B$ with
$\sigma_{A,i}$ being incoherent and $\tau_{B,i}\in\cD(\cH)$ such that
\begin{eqnarray}\label{ineq:bi_max2}
\lambda\sum_ip_i\sigma_{A,i}\ot \tau_{B,i}\geq \bar{\rho}_{AB}.
\end{eqnarray}
According to the definition of $C^{\epsilon}_{\min}$, there exists an positive operator $T_{A}$ with $0\leq T_A\leq \mathbb{I}_A$, $\Tr{T_A\rho_A}\geq 1-\epsilon$ such that $C^{\epsilon}_{\min}(A)=\min_{\sigma_A\in\cI_A}-\log\Tr{T_A\sigma_A}$. Thus, for any incoherent state $\sigma_A\in\cI_A$,
$C^{\epsilon}_{\min}(A)\leq -\log\Tr{T_A\sigma_A}$, i.e., $2^{-C^{\epsilon}_{\min}(A)}\geq \Tr{T_A\sigma_A}$.

Applying $\sqrt{T_A}(\cdot) \sqrt{T_A}$ on the both sides of \eqref{ineq:bi_max2}, one gets
\begin{eqnarray}
\lambda\sum_ip_i\sqrt{T_A}\sigma_{A,i}\sqrt{T_A}\ot \tau_{B,i}\geq \sqrt{T_A}\bar{\rho}_{AB}\sqrt{T_A}.
\end{eqnarray}
Then $\widetilde{\sigma}_{AB}=\sum_ip_i\sqrt{T_A}\sigma_{A,i}\sqrt{T_A}\ot \tau_{B,i}=:\mu\sigma_{A:B}$ with $\sigma_{A:B}$ being separable, i.e.,  $\sigma_{A:B}\in S_{A:B}$ and
$\mu=\Tr{\widetilde{\sigma}_{AB}}=\sum_ip_i\Tr{T_A\sigma_{A,i}}\leq 2^{-C^{\epsilon}_{\min}(A)}$.
Besides, $\rho'_{AB}=\sqrt{T_A}\bar{\rho}_{AB}\sqrt{T_A}\in B_{\epsilon'}(\rho_{AB})$ with $\epsilon'=\epsilon+2\sqrt{\epsilon}$.
That is, $\lambda\mu \sigma_{A:B}\geq \rho'_{AB}$ with $\sigma_{A:B}\in S_{A:B}$.
Thus,
\begin{eqnarray*}
E^{\epsilon'}_{\max}(A:B)
&\leq& E_{\max}(\rho'_{AB})\\
&\leq& \log\lambda+\log\mu\\
&\leq& C^{\epsilon}_{\max}(A|B)-C^{\epsilon}_{\min}(A).\\
\end{eqnarray*}

\end{proof}

\begin{mproof}[ Proof of Theorem \ref{thm:dis_coh_mul}]
Here, we only need to prove the case N=3.
Due to Lemmas \ref{lem:1} and \ref{lem:coh_ent}, we have
\begin{eqnarray*}
C^{\epsilon}_{\max}(ABC)
&\geq& E^{\epsilon_2}_{\max}(A:B:C)+C^{\epsilon_1}_{\min}(A)\\
&&+C^{\epsilon_1}_{\min}(B)+C^{\epsilon}_{\min}(C),
\end{eqnarray*}
where $\epsilon_1=\epsilon+2\sqrt{\epsilon}$ and
$\epsilon_2=\epsilon_1+2\sqrt{2\epsilon_1}$.
It has been proved that $E_{\max}$ is equivalent to the regularized relative entropy of entanglement $E^{\infty}_r$ in asymptotic limit \cite{Datta2009,Brandao2011}, that is,
\begin{eqnarray}\nonumber
\lim_{\epsilon\to 0}\lim_{n\to \infty}
\frac{1}{n}E^{\epsilon}_{\max}(\rho^{\ot n}_{AB})=E^{\infty}_r(A:B).
\end{eqnarray}
Thus, combined with \eqref{eq:smooth1},  we obtain the  theorem.
\end{mproof}

\begin{lem}\label{lem:dis_ent}
Given a tripartite state $\rho_{ABC}\in\cD(\cH_A\ot\cH_B\ot\cH_C)$ and a parameter $\epsilon\geq 0$,
\begin{eqnarray}\nonumber
E^{\epsilon}_{\max}(A:B:C)&\geq& E^{\epsilon'}_{\max}(A:BC)+E^{\epsilon}_{\min}(B:C),\\ \nonumber
E^{\epsilon}_{\max}(A:B:C)&\geq& E^{\epsilon'}_{\max}(A:B)+E^{\epsilon}_{\min}(B:C),
\end{eqnarray}
where $\epsilon'=\epsilon+2\sqrt{\epsilon}$ and $E^{\epsilon'}_{\min}$ is the smooth min-relative entropy of entanglement defined in \cite{Datta2009,Brandao2011} as follows,
\begin{eqnarray*}
E^{\epsilon}_{\min}(B:C):=\max_{\substack{
0\leq O\leq \mathbb{I}\\
\Tr{O\rho_{BC}}\geq 1-\epsilon}}
\min_{\sigma\in S_{B:C}}-\log\Tr{O\sigma},
\end{eqnarray*}
with $S_{B:C}$ being  the set of separable states.
\end{lem}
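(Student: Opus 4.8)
The plan is to adapt the argument used in the proofs of Lemmas \ref{lem:1} and \ref{lem:coh_ent}: extract an optimal smoothed state together with an optimal fully separable witness for $E^{\epsilon}_{\max}(A:B:C)$, extract an optimal test operator for $E^{\epsilon}_{\min}(B:C)$, and then ``squeeze'' the resulting operator inequality by that test operator acting on the $BC$ block. Concretely, for the first inequality I would use the definition of $E^{\epsilon}_{\max}(A:B:C)$ to pick $\bar{\rho}_{ABC}\in B_{\epsilon}(\rho_{ABC})$ with $E^{\epsilon}_{\max}(A:B:C)=E_{\max}(\bar{\rho}_{ABC})$ and set $\lambda=2^{E^{\epsilon}_{\max}(A:B:C)}$, so that there is a fully separable $\sigma_{ABC}=\sum_i p_i\,\sigma_{A,i}\ot\sigma_{B,i}\ot\sigma_{C,i}\in S_{A:B:C}$ with $\lambda\sigma_{ABC}\geq\bar{\rho}_{ABC}$. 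Next I invoke the definition of $E^{\epsilon}_{\min}(B:C)$ for $\rho_{BC}=\Ptr{A}{\rho_{ABC}}$ to get an operator $T_{BC}$ with $0\leq T_{BC}\leq\mathbb{I}_{BC}$, $\Tr{T_{BC}\rho_{BC}}\geq 1-\epsilon$, and $\Tr{T_{BC}\sigma}\leq 2^{-E^{\epsilon}_{\min}(B:C)}$ for every $\sigma\in S_{B:C}$.

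Applying $\sqrt{T_{BC}}(\cdot)\sqrt{T_{BC}}$ on the $BC$ subsystem gives $\lambda\sum_i p_i\,\sigma_{A,i}\ot\sqrt{T_{BC}}(\sigma_{B,i}\ot\sigma_{C,i})\sqrt{T_{BC}}\geq\sqrt{T_{BC}}\bar{\rho}_{ABC}\sqrt{T_{BC}}$. The key structural observation is that each term $\sigma_{A,i}\ot[\sqrt{T_{BC}}(\sigma_{B,i}\ot\sigma_{C,i})\sqrt{T_{BC}}]$ is a product across the $A:BC$ cut, so the left-hand side equals $\lambda\mu\,\sigma_{A:BC}$ with $\sigma_{A:BC}\in S_{A:BC}$, and, using that each $\sigma_{B,i}\ot\sigma_{C,i}\in S_{B:C}$, one gets $\mu=\sum_i p_i\Tr{T_{BC}(\sigma_{B,i}\ot\sigma_{C,i})}\leq 2^{-E^{\epsilon}_{\min}(B:C)}$. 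Meanwhile $\rho'_{ABC}:=\sqrt{T_{BC}}\bar{\rho}_{ABC}\sqrt{T_{BC}}$ lies in $B_{\epsilon'}(\rho_{ABC})$ with $\epsilon'=\epsilon+2\sqrt{\epsilon}$, by the triangle inequality together with the gentle operator lemma (Lemma \ref{lem:gen_op}) applied with $M=\mathbb{I}_A\ot T_{BC}$, exactly as in Lemma \ref{lem:1}. Hence $\lambda\mu\,\sigma_{A:BC}\geq\rho'_{ABC}$ yields $E^{\epsilon'}_{\max}(A:BC)\leq E_{\max}(\rho'_{ABC})\leq\log\lambda+\log\mu\leq E^{\epsilon}_{\max}(A:B:C)-E^{\epsilon}_{\min}(B:C)$, which rearranges to the first claim.

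For the second inequality the only change is an additional partial trace over $C$ after the squeezing step. Tracing out $C$ turns the left-hand side into $\lambda\sum_i p_i\,\sigma_{A,i}\ot\tau_{B,i}$ with $\tau_{B,i}=\Ptr{C}{\sqrt{T_{BC}}(\sigma_{B,i}\ot\sigma_{C,i})\sqrt{T_{BC}}}\geq 0$, which is separable across $A:B$ and still has total trace $\mu\leq 2^{-E^{\epsilon}_{\min}(B:C)}$, while the right-hand side becomes $\rho'_{AB}=\Ptr{C}{\sqrt{T_{BC}}\bar{\rho}_{ABC}\sqrt{T_{BC}}}$, which remains in $B_{\epsilon'}(\rho_{AB})$ because the partial trace is contractive in the trace norm (so the bound on $\|\rho'_{ABC}-\rho_{ABC}\|_1$ descends to $\|\rho'_{AB}-\rho_{AB}\|_1$). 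The same bookkeeping then gives $E^{\epsilon'}_{\max}(A:B)\leq E^{\epsilon}_{\max}(A:B:C)-E^{\epsilon}_{\min}(B:C)$.

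I expect the main obstacle to be the careful verification that these operations preserve the intended separability structure: namely that squeezing by $\sqrt{T_{BC}}$ keeps the $A:BC$ (respectively, after tracing, the $A:B$) factorization intact, and that the trace bound on $\mu$ is legitimate precisely because each $\sigma_{B,i}\ot\sigma_{C,i}$ is a $B:C$-separable state and is therefore admissible in the minimization defining $E^{\epsilon}_{\min}(B:C)$. The smoothing bookkeeping is routine once the gentle operator lemma and contractivity of the partial trace are in place, so the genuine content lies in identifying the correct separable witness at each stage.
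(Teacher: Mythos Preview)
Your proposal is correct and follows essentially the same approach as the paper: pick the optimal smoothed state and fully separable witness for $E^{\epsilon}_{\max}(A:B:C)$, squeeze by $\sqrt{T_{BC}}$ coming from the optimal test operator for $E^{\epsilon}_{\min}(B:C)$, and read off the bound from the resulting operator inequality together with the gentle operator lemma. The paper only spells out the first inequality and says the second is ``similar''; your handling of the second inequality via an additional partial trace over $C$ and contractivity of the trace norm is exactly the intended completion.
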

\begin{proof}
The proof of these two inequalities are similar,  we only prove the first one.
Due to the definition of
$E^{\epsilon}_{\max}(A:B:C)$, there exists an optimal state  $\bar{\rho}_{ABC}\in B_{\epsilon}(\rho_{ABC})$ such that
$E^{\epsilon}_{\max}(A:B:C)=E^{A:B:C}_{\max}(\bar{\rho}_{ABC})$. Taking
$\lambda=2^{E^{\epsilon}_{\max}(A:B:C)}$, we have that there exists a separable state $\tau_{A:B:C}=\sum_{i}p_i\tau_{A,i}\ot \tau_{B,i} \ot \tau_{C,i}\in S_{A:B:C}$ such that
\begin{eqnarray}\label{ineq:bi_max3}
\lambda\sum_ip_i\tau_{A,i}\ot \tau_{B,i} \ot \tau_{C,i}\geq \bar{\rho}_{ABC}.
\end{eqnarray}
According to the definition of $E^{\epsilon}_{\min}(B:C)$, there exists an positive operator $T_{BC}$ with $0\leq T_{BC}\leq \mathbb{I}_{BC}$, $\Tr{T_{BC}\rho_{BC}}\geq 1-\epsilon$ such that $E^{\epsilon}_{\min}(B:C)=\min_{\tau_{B:C}\in S_{B:C}}-\log\Tr{T_{BC}\tau_{B:C}}$. Thus, for any separable $\tau_{B:C}\in S_{B:C}$,
$E^{\epsilon}_{\min}(B:C)\leq -\log\Tr{T_{BC}\tau_{B:C}}$, i.e.,
$E2^{-E^{\epsilon}_{\min}(B:C)}\leq \Tr{T_{BC}\tau_{B:C}}$.

Now  applying $\sqrt{T_{BC}}(\cdot) \sqrt{T_{BC}}$ on the both sides of \eqref{ineq:bi_max3}, one obtains
\begin{eqnarray*}
&&\lambda\sum_ip_i\sqrt{T_{BC}}\tau_{A,i}\ot \tau_{B,i} \ot \tau_{C,i}\sqrt{T_{BC}}\\
&\geq& \sqrt{T_{BC}}\bar{\rho}_{ABC}\sqrt{T_{BC}}.
\end{eqnarray*}
Then $\widetilde{\tau}_{A:BC}=\sum_ip_i\sqrt{T_{BC}}\tau_{A,i}\ot \tau_{B,i} \ot \tau_{C,i}\sqrt{T_{BC}}=:\mu\tau_{A:BC}$, where $\tau_{A:BC}$ is separable with respect to the partition $A:BC$, i.e.,  $\tau_{A:BC}\in S_{A:BC}$ and
$\mu=\Tr{\widetilde{\tau}_{A:BC}}=\sum_ip_i\Tr{T_{BC}\tau_{B,i}\ot \tau_{C,i}}\leq 2^{-E^{\epsilon}_{\min}(B:C)}$.
Besides, $\rho'_{AB}=\sqrt{T_{BC}}\bar{\rho}_{ABC}\sqrt{T_{BC}}\in B_{\epsilon'}(\rho_{AB})$ with $\epsilon'=\epsilon+2\sqrt{\epsilon}$.
That is, $\lambda\mu \tau_{A:BC}\geq \rho'_{ABC}$ with $\tau_{A:BC}\in S_{A:BC}$.
Thus,
\begin{eqnarray*}
E^{\epsilon'}_{\max}(A:BC)
&\leq& E^{A:BC}_{\max}(\rho'_{ABC})\\
&\leq& \log\lambda+\log\mu\\
&\leq& E^{\epsilon}_{\max}(A:B:C)-E^{\epsilon}_{\min}(B:C).\\
\end{eqnarray*}

\end{proof}

\begin{mproof}[ Proof of Eqs. (\ref{eq:ent1}) and (\ref{eq:ent2})]
This result comes directly from Lemma \ref{lem:dis_ent}
and the equivalence between $E^{\epsilon}_{\max}$ and $E^{\infty}_r$ in the asymptotic limit.
\end{mproof}

\section{Monogamy of coherence}\label{apen:mon_coh}
\begin{lem}\label{lem:mon_coh}
Given a tripartite state $\rho_{ABC}\in\cD(\cH_A\ot\cH_B\ot\cH_C)$, one has
\begin{eqnarray}
\label{eq:chain_rule}
C^{\epsilon}_{\max}(AB|C)
&\geq& C^{\epsilon'}_{\max}(A|BC)+C^{\epsilon}_{\min}(B|C),\\
\label{eq:mon_coh}
C^{\epsilon}_{\max}(AB|C)
&\geq& C^{\epsilon'}_{\max}(A|C)+C^{\epsilon}_{\min}(B|C),
\end{eqnarray}
where $\epsilon'=\epsilon+2\sqrt{\epsilon}$.
\end{lem}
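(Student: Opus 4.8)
The plan is to reuse the ``measure-and-rescale'' template that underlies Lemmas \ref{lem:1}, \ref{lem:coh_ent}, and \ref{lem:dis_ent}, with the gentle operator lemma \ref{lem:gen_op} supplying the smoothing estimate. I would prove the chain-rule inequality \eqref{eq:chain_rule} directly and then deduce the monogamy inequality \eqref{eq:mon_coh} from it by monotonicity.

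For \eqref{eq:chain_rule}, I would first invoke the definition of $C^{\epsilon}_{\max}(AB|C)$ to obtain an optimal smoothed state $\bar{\rho}_{ABC}\in B_{\epsilon}(\rho_{ABC})$ with $C^{\epsilon}_{\max}(AB|C)=C^{AB|C}_{\max}(\bar{\rho}_{ABC})$, together with a dominating state $\sigma_{AB|C}=\sum_i p_i\sigma^{AB}_i\ot\tau^C_i\in\cI_{AB}Q_C$ (each $\sigma^{AB}_i$ incoherent on $AB$) satisfying $\lambda\sigma_{AB|C}\geq\bar{\rho}_{ABC}$ with $\lambda=2^{C^{\epsilon}_{\max}(AB|C)}$. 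Next I would extract from $C^{\epsilon}_{\min}(B|C)$ a test operator $T_{BC}$ with $0\leq T_{BC}\leq\mathbb{I}_{BC}$ and $\Tr{T_{BC}\rho_{BC}}\geq 1-\epsilon$, so that $\Tr{T_{BC}\eta_{B|C}}\leq 2^{-C^{\epsilon}_{\min}(B|C)}$ for every $\eta_{B|C}\in\cI_B Q_C$. Applying $\sqrt{T_{BC}}(\cdot)\sqrt{T_{BC}}$ to both sides then yields $\lambda\,\widetilde{\sigma}_{ABC}\geq\rho'_{ABC}$, where $\widetilde{\sigma}_{ABC}:=\sum_i p_i\sigma^{AB}_i\ot\sqrt{T_{BC}}\,\tau^C_i\,\sqrt{T_{BC}}$ and $\rho'_{ABC}:=\sqrt{T_{BC}}\bar{\rho}_{ABC}\sqrt{T_{BC}}$.

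The key step is to identify both sides correctly. Writing each incoherent state as $\sigma^{AB}_i=\sum_k\proj{k}_A\ot\gamma^{(i)}_k$ with $\gamma^{(i)}_k$ incoherent on $B$, I obtain $\widetilde{\sigma}_{ABC}=\sum_{i,k}p_i\proj{k}_A\ot\sqrt{T_{BC}}(\gamma^{(i)}_k\ot\tau^C_i)\sqrt{T_{BC}}$, which is manifestly proportional to a state $\sigma_{A|BC}\in\cI_A Q_{BC}$ since every $\proj{k}_A$ is incoherent on $A$ while $BC$ now plays the role of the quantum memory. For the normalization $\mu=\Tr{\widetilde{\sigma}_{ABC}}$, tracing out $A$ first gives $\mu=\sum_i p_i\Tr{T_{BC}(\beta^B_i\ot\tau^C_i)}$ with $\beta^B_i=\Ptr{A}{\sigma^{AB}_i}$ incoherent on $B$; since each $\beta^B_i\ot\tau^C_i\in\cI_B Q_C$, the defining property of $T_{BC}$ bounds $\mu\leq 2^{-C^{\epsilon}_{\min}(B|C)}$. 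The smoothing estimate $\rho'_{ABC}\in B_{\epsilon'}(\rho_{ABC})$ with $\epsilon'=\epsilon+2\sqrt{\epsilon}$ follows exactly as in Lemma \ref{lem:1}, using $\Tr{T_{BC}\rho_{ABC}}=\Tr{T_{BC}\rho_{BC}}\geq 1-\epsilon$ together with Lemma \ref{lem:gen_op}. Combining $\lambda\mu\,\sigma_{A|BC}\geq\rho'_{ABC}$ then gives $C^{\epsilon'}_{\max}(A|BC)\leq C^{A|BC}_{\max}(\rho'_{ABC})\leq\log\lambda+\log\mu\leq C^{\epsilon}_{\max}(AB|C)-C^{\epsilon}_{\min}(B|C)$, which rearranges to \eqref{eq:chain_rule}.

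For \eqref{eq:mon_coh} I would simply discard $B$ from the quantum memory: by property (iv) (monotonicity of $C^{A|\,\cdot}_{\max}$ under the CPTP partial trace on the $BC$-side) and the contractivity of the trace norm under partial trace, I get $C^{\epsilon'}_{\max}(A|BC)\geq C^{\epsilon'}_{\max}(A|C)$, so \eqref{eq:mon_coh} follows immediately from \eqref{eq:chain_rule}. I expect the main obstacle to be the bookkeeping in the key step: verifying that after the $\sqrt{T_{BC}}$ sandwich the $A$-register stays incoherent while $B$ is correctly reabsorbed into the memory, and that the reduced objects $\beta^B_i\ot\tau^C_i$ are genuine $\cI_B Q_C$ states so that the $C^{\epsilon}_{\min}(B|C)$ bound legitimately applies; the remaining manipulations are the established template.
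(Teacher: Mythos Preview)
Your proposal is correct and follows the same ``sandwich with $\sqrt{T_{BC}}$ + gentle operator'' template as the paper. The only differences are cosmetic: (i) the paper writes the optimal $\cI_{AB}Q_C$ state directly in the refined product form $\sum_i p_i\,\sigma_{A,i}\ot\sigma_{B,i}\ot\tau_{C,i}$ rather than first taking $\sigma^{AB}_i$ and then decomposing it, and (ii) the paper proves \eqref{eq:mon_coh} directly (by additionally tracing out $B$ after the sandwich) rather than deducing it from \eqref{eq:chain_rule} via monotonicity under the partial trace on the memory---your route is slightly more economical but not materially different. One notational slip to fix: your first display $\widetilde{\sigma}_{ABC}=\sum_i p_i\,\sigma^{AB}_i\ot\sqrt{T_{BC}}\,\tau^C_i\,\sqrt{T_{BC}}$ is ill-typed since $T_{BC}$ acts on $BC$, not on $C$ alone; the corrected expression you give in the next paragraph is the right one.
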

\begin{proof}
The proof of these two inequalities is similar,  we only prove the second one.
Due to the definition of
$C^{\epsilon}_{\max}(AB|C)$, there exists an optimal state $\bar{\rho}_{ABC}\in B_{\epsilon}(\rho_{ABC})$ such that
$C^{\epsilon}_{\max}(AB|C)=C^{AB|C}_{\max}(\bar{\rho}_{ABC})$. Let us take
$\lambda=2^{C^{\epsilon}_{\max}(AB|C)}$, then there exists an incoherent-quantum state $\sigma_{AB|C}=\sum_{i}p_i\sigma_{A,i}\ot \sigma_{B,i} \ot \tau_{C,i}\in \cI_{AB}Q_C$ with $\sigma_{A,i}, \sigma_{B,i}$ being incoherent such that
\begin{eqnarray}\label{ineq:bi_max3}
\lambda\sum_ip_i\sigma_{A,i}\ot \sigma_{B,i} \ot \tau_{C,i}\geq \bar{\rho}_{ABC}.
\end{eqnarray}
According to the definition of $C^{\epsilon}_{\min}(B|C)$, there exists a positive operator $T_{BC}$ with $0\leq T_{BC}\leq \mathbb{I}_{BC}$, $\Tr{T_{BC}\rho_{BC}}\geq 1-\epsilon$ such that $C^{\epsilon}_{\min}(B|C)=\min_{\sigma_{B|C}\in \cI_BQ_C}-\log\Tr{T_{BC}\sigma_{B|C}}$. Thus, for any in incoherent-quantum state $\sigma_{B|C}\in \cI_BQ_C$,
$C^{\epsilon}_{\min}(B|C)\leq -\log\Tr{T_{BC}\sigma_{B|C}}$, i.e.,
$2^{-C^{\epsilon}_{\min}(B|C)}\geq \Tr{T_{BC}\sigma_{B|C}}$.

Applying $\sqrt{T_{BC}}(\cdot) \sqrt{T_{BC}}$ on the both sides of \eqref{ineq:bi_max3} and take a partial trace on part B, we get
\begin{eqnarray*}
&&\lambda\sum_ip_i\Ptr{B}{\sqrt{T_{BC}}\sigma_{A,i}\ot\sigma_{B,i} \ot \tau_{C,i}\sqrt{T_{BC}}}\\
&&\geq \Ptr{B}{\sqrt{T_{BC}}\bar{\rho}_{ABC}\sqrt{T_{BC}}}.
\end{eqnarray*}
Then $\widetilde{\sigma}_{A|C}=\sum_ip_i\Ptr{B}{\sqrt{T_{BC}}\sigma_{A,i}\ot\sigma_{B,i} \ot \tau_{C,i}\sqrt{T_{BC}}}=:\mu\sigma_{A|C}$ where $\sigma_{A|C}$ is an incoherent-quantum state, i.e.,  $\sigma_{A|C}\in \cI_AQ_C$ and
$\mu=\Tr{\widetilde{\sigma}_{A|C}}=\sum_ip_i\Tr{T_{BC}\sigma_{B,i}\ot \tau_{C,i}}\leq 2^{-C^{\epsilon}_{\min}(B|C)}$.
Besides, $\rho'_{AC}=\Ptr{B}{\sqrt{T_{BC}}\bar{\rho}_{ABC}\sqrt{T_{BC}}}\in B_{\epsilon'}(\rho_{AC})$ with $\epsilon'=\epsilon+2\sqrt{\epsilon}$, as
$
\norm{\rho'_{AC}-\rho_{AC}}_1\leq \norm{\sqrt{T_{BC}}\bar{\rho}_{ABC}\sqrt{T_{BC}}-\rho_{ABC}}_1\leq \epsilon'
$.
That is, $\lambda\mu \sigma_{A|C}\geq \rho'_{AC}$ with $\sigma_{A|C}\in \cI_AQ_C$.
Thus,
\begin{eqnarray*}
C^{\epsilon'}_{\max}(A|C)
&\leq& C^{A|C}_{\max}(\rho'_{AC})\\
&\leq& \log\lambda+\log\mu\\
&\leq& C^{\epsilon}_{\max}(AB|C)-C^{\epsilon}_{\min}(B|C).\\
\end{eqnarray*}

\end{proof}

\begin{mproof}[ Proof of Theorem \ref{thm:mon_coh}]
These results come directly from Lemma \ref{lem:mon_coh} and the equivalence between
$C^{A|B,\epsilon}_{\max\setminus\min}$ and $C^{A|B}_r$ in the asymptotic limit.
\end{mproof}

In the data processing, there is a chain rule for von Neumann entropy, $S(AB|C)=S(A|BC)+S(B|C)$, where the condition entropy $S(A|B)$ is defined by $S(A|B):=S(\rho_{AB})-S(\rho_{B})$ with $\rho_B$ being the reduced state on subsystem B. There is also a chain rule for $C_r$ as follows:
for any tripartite $\rho_{ABC}\in\cD(\cH_A\ot\cH_B\ot\cH_C)$, it holds that
\begin{eqnarray}\label{eq:chain_cr}
C_r(AB|C)\geq C_r(A|BC)+C_r(B|C),
\end{eqnarray}
which results from Lemma \ref{lem:mon_coh}  by the equivalence between
$C^{A|B,\epsilon}_{\max\setminus\min}$ and $C^{A|B}_r$ in the asymptotic limit.

\begin{mproof}[Proof of Equation \eqref{eq:re_condi}]
The result comes directly from the definition of coherence measure $C_r$.
Due to the definitions, $C_r(AB|C)$, $C_r(A|C)$ and $C_r(B|C)$ can be written as
\begin{eqnarray*}
C_r(AB|C)&=&S(\Delta_A\ot\Delta_B(\rho_{ABC}))-S(\rho_{ABC});\\
C_r(A|C)&=&S(\Delta_A(\rho_{AC}))-S(\rho_{AC});\\
C_r(B|C)&=&S(\Delta_B(\rho_{BC}))-S(\rho_{BC});\\
\end{eqnarray*}
where $\rho_{AC}$ and $\rho_{BC}$ are the corresponding reduced states of $\rho_{ABC}$.
Thus
\begin{eqnarray*}
&&C_r(AB|C)-C_r(A|C)-C_r(B|C)\\
&=&S(\Delta_A\ot\Delta_B(\rho_{ABC}))\\
&&-S(\rho_{ABC})-\Pa{S(\Delta_A(\rho_{AC}))-S(\rho_{AC})}\\
&&-\Pa{S(\Delta_B(\rho_{BC}))-S(\rho_{BC})}\\
&=&\Pa{S(\rho_{AC})+S(\rho_{BC})-S(\rho_{C})-S(\rho_{ABC})}\\
&&-[S(\Delta_A(\rho_{AC}))+S(\Delta_B(\rho_{BC}))\\
&&-S(\rho_{C})-S(\Delta_A\ot\Delta_B(\rho_{ABC}))]\\
&=&I(A:B|C)_{\rho}-I(A:B|C)_{\Delta_A\ot\Delta_B(\rho)}\\
&\leq &I(A:B|C)_{\rho},
\end{eqnarray*}
where the last inequality comes from the fact that $I(A:B|C)\geq 0$.

\end{mproof}

\begin{mproof}[Proof of Equation \eqref{eq:cond_r_bi}]
 This comes directly from  \eqref{eq:chain_cr} and Equation (18) in the main context.
\end{mproof}

\end{document}